\newtheorem{lemma}{Lemma}
\newtheorem{theorem}{Theorem}
\date{\today}
\begin{document}

% \title{A multiplayer, multi-team perfect game for the Toric Code}
\title{A multi-player, multi-team nonlocal game for the toric code}

\author{Vir B. Bulchandani}
\affiliation{Princeton Center for Theoretical Science,  Princeton University, Princeton, New Jersey 08544, USA}
\affiliation{Department of Physics, Princeton University, Princeton, New Jersey 08544, USA}

\author{Fiona J. Burnell}
\affiliation{School of Physics and Astronomy, University of Minnesota, Minneapolis, Minnesota 55455, USA}

\author{S. L. Sondhi}
%\affiliation{Department of Physics, Princeton University, Princeton, New Jersey 08544, USA}
\affiliation{Rudolf Peierls Centre for Theoretical Physics, University of Oxford, Oxford OX1 3PU, United Kingdom}

\begin{abstract}
Nonlocal games yield an unusual perspective on entangled quantum states. The defining property of such games is that a set of players in joint possession of an entangled state can win the game with higher probability than is allowed by classical physics. Here we construct a nonlocal game that can be won with certainty by $2N$ players if they have access to the ground state of the toric code on as many qubits. By contrast, the game cannot be won by classical players more than half the time in the large $N$ limit. Our game differs from previous examples because it arranges the players on a lattice and allows them to carry out quantum operations in teams, whose composition is dynamically specified. This is natural when seeking to characterize the degree of quantumness of non-trivial many-body states, which potentially include states in much more varied phases of matter than the toric code. We present generalizations of the toric code game to states with $\mathbb{Z}_M$ topological order.
\end{abstract}

\maketitle

\section{Introduction}
Quantum pseudo-telepathy refers to the notion that players who are allowed to share entangled quantum states can win certain cooperative games with higher probability than a set of purely classical players\cite{Brassard2005}. Games that exhibit quantum pseudo-telepathy are called nonlocal games; some well-known examples are the magic square game\cite{MerminMagic,PeresMagic} and the parity game\cite{mermin1990extreme,brassard2005recasting}. Optimal quantum strategies for nonlocal games typically involve a specific choice of quantum state tailored to the rules of the game. Conversely, if quantum players are able to win a nonlocal game with high probability and without communicating classically with one another, this can impose strong constraints on the state that they are using to play the game. A prominent recent application of this idea was the demonstration of so-called ``device-independent self-testing'' for the two-singlet state\cite{SelfTest,DeviceIndependent,SelfTestVaz,MagicSelfTesting,Grilo}. 

The advantage that quantum mechanics provides for nonlocal games ultimately arises from the feature of entangled quantum states known as     contextuality\cite{Spekkens_2008,Abramsky_2011}, or the impossibility of reproducing the statistical ensemble of outcomes of a quantum mechanical experiment from a set of local, classical hidden variables. Thus nonlocal games provide an appealingly intuitive method for capturing the degree of ``quantumness'' of a given physical state. Among such games, the parity game has the unusual property that it is \emph{scalable}\cite{Brassard2005}; for any $N \geq 3$, it is possible to define a parity game on $N$ players that exhibits quantum pseudo-telepathy and can be won with certainty if the players share an $N$-qubit Greenberger-Horne-Zeilinger (GHZ) state before playing the game. This scalability property mimics the scaling behaviour of extensive many-body quantum systems, and therefore suggests a natural point of departure towards condensed matter physics. In particular, the question arises of whether nonlocal games can be used to verify contextuality of the ground states of realistic physical systems. 

This question is of basic theoretical interest, in part because of recent evidence that contextuality is the resource that allows for universal quantum computation\cite{ContextualityMagic}, rather than entanglement\cite{EntUseless}. At the same time, the vast majority of studies of the information-theoretic aspects of many-body quantum states have focused on their entanglement structure rather than their contextuality properties\cite{QuantReview}, with some notable exceptions\cite{Cluster,BellNonloc,DetectNonlocality,BECBell,ColdAtomBell,EnergyNonloc,Bravyi,DanielMiyake,Sheffer_2022}.

Here we report some concrete progress on verifying contextuality in many-body systems, in the form of a nonlocal game that can be won with certainty if the players share the ground state of a toric code Hamiltonian\cite{KITAEV20032}. This game is based on the parity game, but differs from previous examples of nonlocal games in some important respects. For example, in this game the players reside on the bonds of a lattice, and may be organized dynamically in ``teams'' whose spatial arrangement inherits the geometry of the lattice. The multiple possibilities for organizing the players in teams reflect the extensively large number of generators of the stabilizer group of the toric code, and endow the rules of our ``toric code game'' with a similarly large degree of flexibility, which is ultimately a consequence of the model's $\mathbb{Z}_2$ topological order.

In our analysis, we shall focus solely on ground states of the ideal toric code Hamiltonian, rather than the gapped topological \emph{phase} that arises when the toric code Hamiltonian is perturbed\cite{KITAEV20032,LaumannKitaev}. More generally, one might ask how far quantum pseudo-telepathy can arise within entire phases of quantum matter, rather than at isolated points within these phases. We note that the latter question has previously been explored for the $\mathbb{Z}_2\times \mathbb{Z}_2$ symmetry-protected topological phase\cite{DanielMiyake}. In a companion paper\cite{companion}, we study this question systematically for both conventional symmetry-breaking order, as in the ferromagnetic phase of the quantum Ising model, and for more exotic kinds of order, such as the topological order under scrutiny below.

The paper is structured as follows. We begin by describing the rules of the toric code game for $2N$ players arranged on a square lattice, and present a perfect quantum strategy that always wins this game. We next prove a uniqueness theorem for the toric code game, to the effect that a perfect quantum strategy for the toric code game can be used to uniquely determine ground states of the toric code Hamiltonian. The full details of this proof are provided in Appendix \ref{App1}. Finally, we exhibit a generalization of the toric code game to toric code states with $\mathbb{Z}_M$ topological order, based on Boyer's modulo $M$ game\cite{boyer2004extended}.

\section{The toric code game}
\label{sec:TCGame}
The toric code game is a $2N$ player game, with one player for every bond of a square lattice $\mathcal{L}$ and $N \geq 3$. We assume periodic boundary conditions in the vertical and horizontal directions, so that $\mathcal{L}$ lies on the surface of a torus. The game is supervised by a referee, who we will call a ``verifier'' by analogy with the theory of interactive
proof systems in computer science\cite{SelfTestVaz}. From this point of view the players are ``proving'' to the verifier that they are in joint possession of a toric code ground state; we will return to this idea later on.

Before the toric code game is played, any given player does not know in advance whether they will participate in the game. At the beginning of the game, the verifier assigns a subset of the players to $T$ non-intersecting vertical loops of the torus $\{\Gamma_i\}_{i=1}^T$, with one player per bond. They assign another subset of players to each bond of a horizontal dual loop $\widetilde{\Gamma}$, which intersects each of the vertical loops $\{\Gamma_i\}_{i=1}^T$ in a single bond $t_i$. The players on the bonds of a given vertical loop $\Gamma_i$ form a ``team'', and may communicate freely with one another, but may not communicate classically outside their team. Also, the players on the bonds of the dual loop $\widetilde{\Gamma}$ may not communicate classically with one another. See Fig. \ref{Fig1} for one example of an allowed configuration of active players.

\begin{figure}[t]
    \centering
    \includegraphics[width=0.75\linewidth]{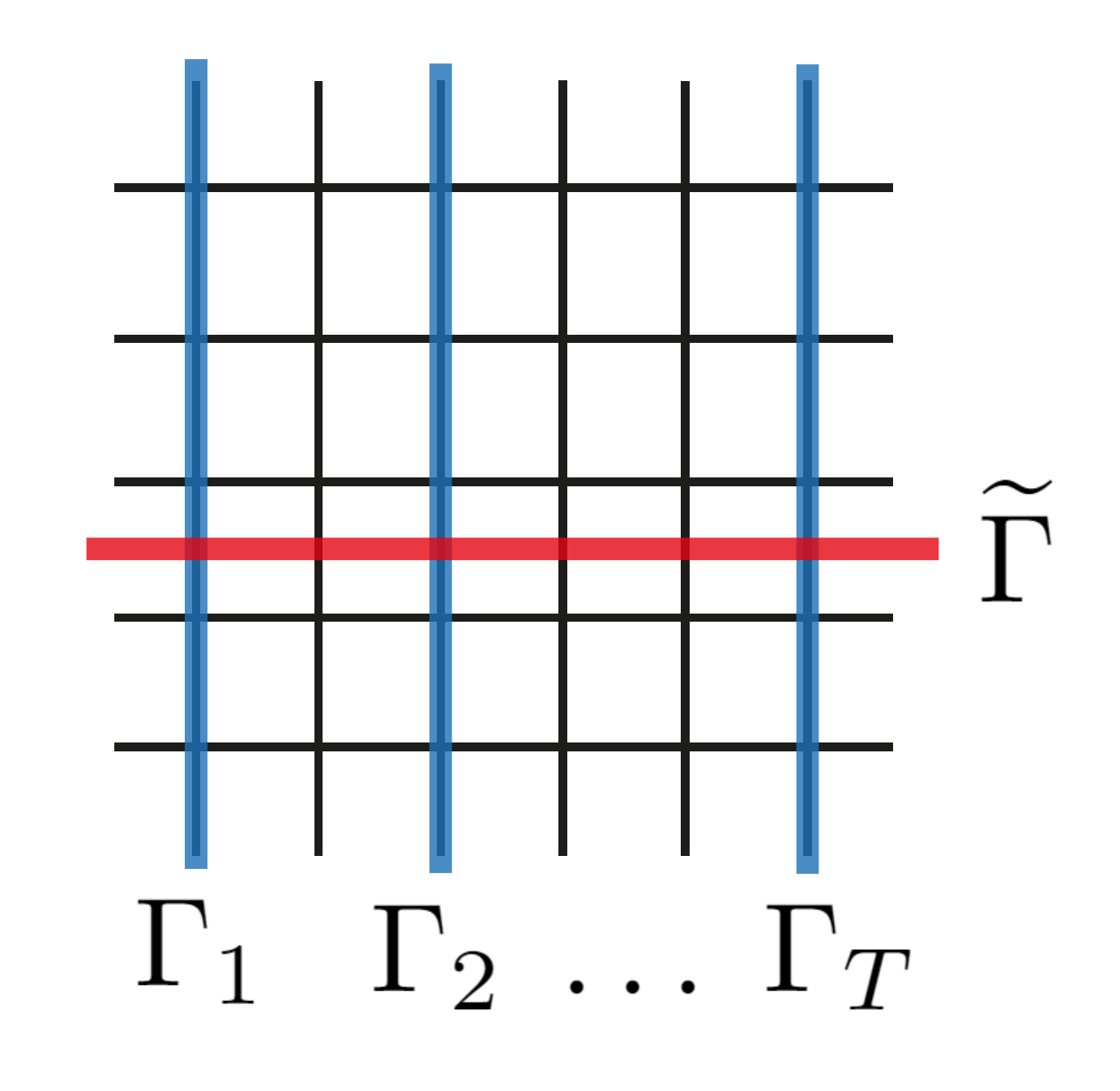}
    \caption{One valid arrangement of players for the toric code game. The vertical loops $\Gamma_1,\, \Gamma_2,\ldots,\Gamma_T$ correspond to ``teams" of players who may communicate classically with one another, but may not communicate between teams. Another subset of players is assigned to the dual loop $\widetilde{\Gamma}$, and there are no doubly occupied bonds, i.e. exactly one player sits on each bond of the union $\left(\cup_{i=1}^T \Gamma_i\right) \cup \widetilde{\Gamma}$. Players on distinct bonds of $\widetilde{\Gamma}$ may not communicate classically with one another.}
    \label{Fig1}
\end{figure}

The verifier gives each team a bit $a_i \in \{0,1\}$, with the promise that $\sum_{i=1}^T a_i$ is even. In order to win the game, each player on a bond of the dual loop $\widetilde{\Gamma}$ must return a bit $y_b$ to the verifier such that
\begin{equation}
\label{eq:wincond}
\sum_{b \in \widetilde{\Gamma}} y_b \equiv \frac{\sum_{i=1}^T a_i}{2} \, \mathrm{mod}\, 2.
\end{equation}

We emphasize that once the underlying square lattice $\mathcal{L}$ is fixed, the verifier may choose the vertical loops $\Gamma_i$, the number of teams $T$ and the dual loop $\widetilde{\Gamma}$ arbitrarily, subject to the constraints described above. For all such choices, there is a perfect quantum strategy that uses the same ground state of the toric code Hamiltonian on $\mathcal{L}$. We now describe this perfect quantum strategy.

We define plaquette and star operators $\hat{A}_p = \prod_{b \in \partial p} \hat{Z}_b$ and  $\hat{B}_s = \prod_{b \in s} \hat{X}_b$ in the usual fashion, in terms of which the toric code Hamiltonian is given by\cite{KITAEV20032,LaumannKitaev}
\begin{equation}
\label{eq:TCH}
\hat{H} = - K \sum_{p} \hat{A}_p - K' \sum_s \hat{B}_s
\end{equation}
where $K,\,K'>0$. The state 
\begin{equation}
|0 0 \rangle = \mathcal{N} \prod_{s} (1+\hat{B}_s)\bigotimes_{b} |\hat{Z}_b = 1\rangle
\end{equation}
where $\mathcal{N}$ is a normalization constant, is a ground state of the toric code Hamiltonian on any closed manifold, since it is an eigenstate with eigenvalue $1$ for any $\hat{A}_p$ or $\hat{B}_s$.  

However, on the torus, this ground state is not unique. To label the full set of degenerate ground states of $\hat{H}$, we define Wilson loop operators $\hat{W}_{x/y} = \prod_{b \in \Gamma_{x/y}} \hat{Z}_b$ and dual Wilson loop operators $\hat{V}_{x/y} = \prod_{b \in \widetilde{\Gamma}_{x/y}} \hat{X}_b$
% \begin{equation}
% \hat{W}_x = \prod_{b \in \Gamma_x} \hat{Z}_b, \quad \hat{W}_y= \prod_{b \in \Gamma_y} \hat{Z}_b
% \end{equation}
% and dual Wilson loop operators
% \begin{equation}
% \hat{V}_x = \prod_{b \in \widetilde{\Gamma}_x} \hat{X}_b, \quad \hat{V}_y= \prod_{b \in \widetilde{\Gamma}_y} \hat{X}_b,
% \end{equation}
as depicted in Fig. \ref{Fig2}. The topologically degenerate ground states $|jk\rangle$, $j,k=0,1$ of the model can be labelled by the eigenvalues
\begin{equation}
\hat{W}_x |jk\rangle = (-1)^j |jk\rangle, \quad \hat{W}_y |jk\rangle = (-1)^k |jk\rangle
\end{equation}
of the Wilson loops, and are obtained from the zero-flux ground state by acting with dual Wilson loop operators $|jk\rangle = (\hat{V}_y)^j(\hat{V}_x)^k|00\rangle$. A perfect quantum strategy then proceeds as follows:
\begin{enumerate}
\item Before playing the game, all players share the ``topological cat'' state
\begin{equation}
\label{eq:topcat}
|\psi \rangle = \frac{1}{\sqrt{2}} \left(|00\rangle + |0 1\rangle\right).
\end{equation}
\item Team $i$ acts with a power $\hat{W}_i^{a_i/2}$ of their Wilson loop $\hat{W}_i = \prod_{b \in \Gamma_i} \hat{Z}_b$ on the shared state $|\psi\rangle$.
\item The players on the bonds of the dual loop $\widetilde{\Gamma}$ each measure their qubit in the $\hat{X}$ or Hadamard basis and return its value $y_b \in \{0,1\}$.
\end{enumerate}
To be explicit, the square root of the Wilson loop operator $\hat{W}_i$ in Step 2 is defined by the eigenspace projection $\hat{W}_i^{a_i/2} = \left(\frac{1+\hat{W}_i}{2}\right) + i^{a_i} \left(\frac{1-\hat{W}_i}{2}\right)$. Note that this branch of the square root is non-local, i.e. $\hat{W}_i^{1/2} \neq \prod_{b \in \Gamma_i} \hat{Z}_b^{1/2}$.

Let us now verify that this quantum strategy always wins the toric code game. After Step 2, the players' shared state is given by
\begin{equation}
\label{eq:afterstep2}
|\psi'\rangle = \frac{1}{\sqrt{2}}(|00\rangle + (-1)^{\sum_{i=1}^T a_i/2}|01\rangle).
\end{equation}
Step 3 is equivalent to the players on the dual loop $\widetilde{\Gamma}$ collectively measuring the dual Wilson loop operator $\hat{V}_{\widetilde{\Gamma}} =\prod_{b \in \widetilde{\Gamma}}\hat{X}_b$ on their shared state to yield an eigenvalue $(-1)^{\sum_{b\in\widetilde{\Gamma}} y_b}$. But the shared state after Step 2 is an eigenstate of $\hat{V}_{\widetilde{\Gamma}}$ with eigenvalue $(-1)^{\sum_{i=1}^T a_i/2}$. It follows that
\begin{equation}
(-1)^{\sum_{b\in\widetilde{\Gamma}} y_b} = (-1)^{\sum_{i=1}^T a_i/2},
\end{equation}
which implies the winning condition Eq. \eqref{eq:wincond} and therefore that the players have won the game. Thus we have shown that the topological cat state Eq. \eqref{eq:topcat} yields a perfect quantum strategy for the toric code game for any choice of teams and dual loop consistent with the geometry depicted in Fig. \ref{Fig1}.

To prove that this defines a genuine nonlocal game, we must check that the best possible classical strategy wins with probability less than $1$. In fact, the condition of no classical communication between the players on the bonds of $\widetilde{\Gamma}$ implies\footnote{This follows by analogous arguments for the parity game\cite{brassard2005recasting}.} that the optimal classical strategy is achieved if the players at the intersecting bonds $\{t_i\} = \Gamma_i \cap \widetilde{\Gamma}$ apply an optimal classical strategy for the $T$ qubit parity game, as tabulated by Brassard-Broadbent-Tapp\cite{brassard2005recasting}, while the remaining players on bonds $b \in \widetilde{\Gamma} \backslash \cup_{i=1}^T \{t_i\}$ return $y_b = 0$. Thus the best classical strategy for the toric code game with $T$ teams wins with probability
\begin{equation}
\label{eq:classwinprob}
p^*_{\mathrm{cl}} = \frac{1}{2} + \frac{1}{2^{\lceil T/2\rceil}}
\end{equation}
and the toric code game is a genuine nonlocal game whenever the number of teams $T \geq 3$. Since it is possible for the verifier to assign up to $T= \mathcal{O}(\sqrt{N})$ teams in an instance of the toric code game, the ``most quantum'' instance of the game has classical probability of winning tending to $1/2$ as the number of players tends to infinity,
\begin{equation}
p^{*}_{\mathrm{cl}} = \frac{1}{2} + \frac{1}{2^{\mathcal{O}(\sqrt{N})}} \to \frac{1}{2}, \quad N \to \infty,
\end{equation}
and always defines a nonlocal game provided the lattice $\mathcal{L}$ is more than three qubits wide.

Finally, we note that the version of the toric code game described above explicitly pairs a set of teams with a single generator of the fundamental group of the torus, in the sense that all the teams in Fig. \ref{Fig1} lie along vertical loops. More generally, one can define analogous nonlocal games on surfaces of arbitrary genus, with arbitrary allowed lattice geometries such as honeycomb lattices\cite{StringNet}, and assign a set of teams to any generator of the fundamental group. Furthermore, the game can be played simultaneously by multiple sets of teams who are each assigned to independent generators of the fundamental group. One could also consider \emph{reducing} the number of fundamental group generators and playing the toric code game with a single logical qubit, for example on the surface of a cylinder or on a patch of the surface code with appropriate boundary conditions\cite{SurfaceCodes}.

For each of these generalizations, there is a perfect quantum strategy that uses the appropriate topological cat state. For example, in one simple generalization of the genus $g=1$ square lattice toric code game described above, the verifier assigns players to horizontal teams and a vertical dual loop; then the perfect quantum strategy is a spatial reflection of the strategy described above, and involves the players sharing the topological cat state $|\psi\rangle = (|00\rangle + |10\rangle)/\sqrt{2}$ before playing the game.  Moreover, the operations carried out by these reflected teams commute with the operations of the original teams, such that the two sets of teams can play simultaneously and both sets can win with certainty.
\begin{figure}[t]
    \centering
    \includegraphics[width=0.99\linewidth]{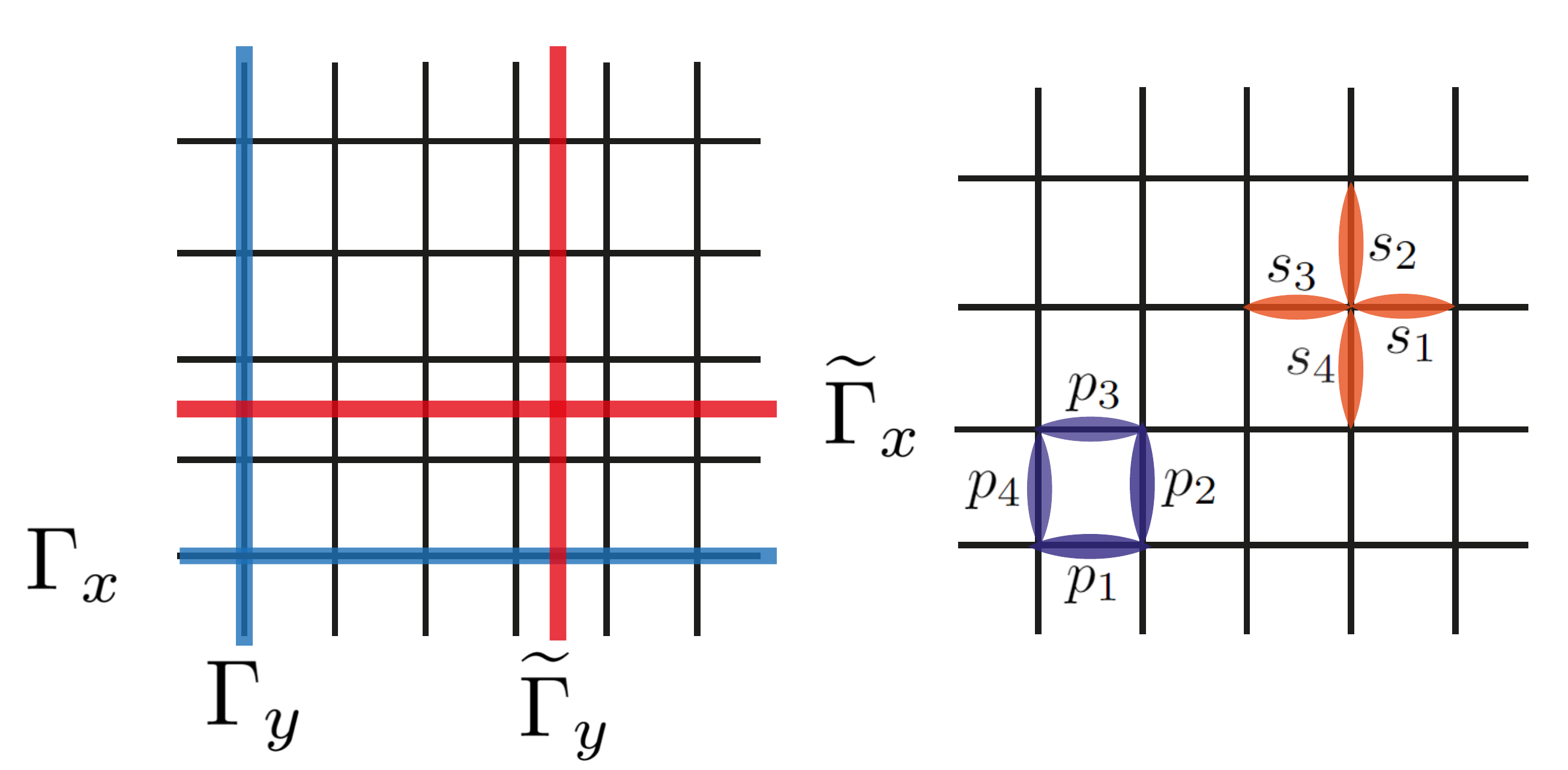}
    \caption{Labelling conventions for Wilson loop operators on the square lattice (\textit{left}) and plaquettes and stars in the $\mathbb{Z}_M$ toric code (\textit{right}).}
    \label{Fig2}
\end{figure}

\section{Uniqueness theorem for the toric code game}
We now prove that a perfect quantum strategy for the toric code game can be used to uniquely determine ground states of the toric code Hamiltonian. We first adapt some definitions from a companion paper\cite{companion} to the present context. We define a \emph{quantum strategy} $\mathcal{S} = (|\psi\rangle,\mathcal{P})$ for an instance of the toric code game to consist of
\begin{enumerate}
\item a $2N$-qubit pure state $|\psi\rangle$, which is shared by all players before the game begins.
\item a \emph{protocol} $\mathcal{P}$, which is a set of quantum gates and measurements that each team or player applies to their qubits.
\end{enumerate}
We define a \emph{perfect quantum strategy} for an instance of the toric code game to be any quantum strategy $\mathcal{S}$ that always wins that game for all allowed inputs\cite{brassard2005recasting}. For concreteness, we will follow the conventions for the toric code game defined in Section \ref{sec:TCGame} and in Fig. \ref{Fig1}, and let $\mathcal{P}_{\mathrm{TC}}$ denote the quantum protocol described in that section. By an ``instance'' of the toric code game, we mean a specific choice of the number of teams $T\geq 3$, vertical loops $\{\Gamma_{j}\}_{j=1}^T$ and a horizontal dual loop $\widetilde{\Gamma}$.

Our uniqueness theorem relies upon the following Lemma, whose proof we defer to Appendix \ref{App1}:
\begin{lemma}
\label{Lemma1}
Let $|\psi\rangle$ be a $2N$-qubit pure state. Then for a given instance of the toric code game, the quantum strategy $\mathcal{S} = (|\psi\rangle,\mathcal{P}_{\mathrm{TC}})$ wins the toric code game with probability
\begin{align}
\nonumber &p_{\mathrm{qu}}(|\psi\rangle)-1/2 \\
= &\sum_{\{\vec{\sigma} \in \{0,1\}^{2N}: \hat{W}_j|\vec{\sigma}\rangle = |\vec{\sigma}\rangle\}} \frac{1}{2}(|\langle \psi |\varphi_{\widetilde{\Gamma}}^{+}(\vec{\sigma})\rangle|^2 - |\langle \psi |\varphi^{-}_{\widetilde{\Gamma}}(\vec{\sigma})\rangle|^2),
\end{align}
where the summation is over all computational basis states $|\vec{\sigma}\rangle =\bigotimes_{b} |\sigma_b\rangle$ that satisfy $\hat{W}_j |\vec{\sigma}\rangle = |\vec{\sigma}\rangle$ for $j=1,2,\ldots,T$, and the states
\begin{align}
\label{WilsonCat}
|\varphi^{\pm}_{\widetilde{\Gamma}}(\vec{\sigma})\rangle = \frac{1}{\sqrt{2}}(1 \pm \hat{V}_{\widetilde{\Gamma}})|\vec{\sigma}\rangle.
\end{align}
\end{lemma}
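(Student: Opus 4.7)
The plan is to compute $p_{\mathrm{qu}}(|\psi\rangle)$ by averaging the single-instance winning probability $P_{\mathrm{win}}(\vec{a})$ uniformly over the $2^{T-1}$ verifier inputs $\vec{a} \in \{0,1\}^T$ with $\sum_i a_i$ even. Writing $\hat{U}(\vec{a}) = \prod_i \hat{W}_i^{a_i/2}$ for the unitary applied in Step 2 of $\mathcal{P}_{\mathrm{TC}}$, the local $\hat{X}$-basis measurements of Step 3 are equivalent to projecting $\hat{U}(\vec{a})|\psi\rangle$ onto the eigenspace of $\hat{V}_{\widetilde{\Gamma}}$ with eigenvalue $(-1)^{\sum_i a_i/2}$, so
\begin{equation}
P_{\mathrm{win}}(\vec{a}) = \frac{1}{2} + \frac{(-1)^{\sum_i a_i/2}}{2}\,\langle\psi|\hat{U}(\vec{a})^\dagger \hat{V}_{\widetilde{\Gamma}}\hat{U}(\vec{a})|\psi\rangle.
\end{equation}

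The main technical step is to propagate $\hat{V}_{\widetilde{\Gamma}}$ through $\hat{U}(\vec{a})$. Since $\Gamma_i$ and $\widetilde{\Gamma}$ share exactly the bond $t_i$, $\hat{V}_{\widetilde{\Gamma}}$ anticommutes with every $\hat{W}_i$. Combined with the spectral expansion $\hat{W}_i^{a_i/2} = \tfrac{1+\hat{W}_i}{2} + i^{a_i}\tfrac{1-\hat{W}_i}{2}$, anticommutation yields the operator identity $\hat{V}_{\widetilde{\Gamma}}\hat{W}_i^{a_i/2} = i^{a_i}\hat{W}_i^{-a_i/2}\hat{V}_{\widetilde{\Gamma}}$. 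Iterating over $i$ and using $\hat{W}_i^2 = 1$ to collapse $(\hat{U}(\vec{a})^\dagger)^2 = \prod_j \hat{W}_j^{a_j}$, one obtains $\hat{U}(\vec{a})^\dagger \hat{V}_{\widetilde{\Gamma}} \hat{U}(\vec{a}) = i^{\sum_i a_i} \prod_j \hat{W}_j^{a_j}\hat{V}_{\widetilde{\Gamma}}$. The crucial cancellation is then $(-1)^{\sum_i a_i/2}\, i^{\sum_i a_i} = i^{2\sum_i a_i} = 1$ for $\sum_i a_i$ even, reducing the single-instance win probability to
\begin{equation}
P_{\mathrm{win}}(\vec{a}) = \frac{1}{2} + \frac{1}{2}\,\Big\langle \psi \Big| \prod_j \hat{W}_j^{a_j}\,\hat{V}_{\widetilde{\Gamma}}\Big|\psi\Big\rangle.
\end{equation}
I expect this phase bookkeeping to be the main obstacle: the non-Hermitian branch $\hat{W}_i^{a_i/2}$ produces exactly the $i^{a_i}$ phase that the winning bit $(-1)^{\sum_i a_i/2}$ is engineered to undo, and this is precisely the calculation that explains \emph{why} the quantum strategy works at all.

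Averaging using $\mathbf{1}[\sum_i a_i \text{ even}] = \tfrac{1}{2}(1+(-1)^{\sum_i a_i})$ converts the input dependence into a sum of projectors: $\tfrac{1}{2^{T-1}}\sum_{\vec{a}:\sum_i a_i\,\mathrm{even}} \prod_j \hat{W}_j^{a_j} = \Pi_+ + \Pi_-$, with $\Pi_\pm = \prod_j (1\pm \hat{W}_j)/2$. Applying the same anticommutation once more gives $\hat{V}_{\widetilde{\Gamma}}\Pi_\pm \hat{V}_{\widetilde{\Gamma}} = \Pi_\mp$, so that $\langle\psi|\Pi_-\hat{V}_{\widetilde{\Gamma}}|\psi\rangle = \overline{\langle\psi|\Pi_+\hat{V}_{\widetilde{\Gamma}}|\psi\rangle}$, and the two terms combine into $p_{\mathrm{qu}}(|\psi\rangle) - \tfrac{1}{2} = \mathrm{Re}\,\langle \psi|\Pi_+\hat{V}_{\widetilde{\Gamma}}|\psi\rangle$.

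To finish, I expand $\Pi_+$ in the computational basis. Since each $\hat{W}_j$ is diagonal, $\Pi_+ = \sum_{\vec{\sigma}}|\vec{\sigma}\rangle\langle\vec{\sigma}|$ with the sum restricted to the bitstrings satisfying $\hat{W}_j|\vec{\sigma}\rangle = |\vec{\sigma}\rangle$ for all $j$. This reduces the real part to $\sum_{\vec{\sigma}} \mathrm{Re}(\langle\psi|\vec{\sigma}\rangle\langle\vec{\sigma}|\hat{V}_{\widetilde{\Gamma}}|\psi\rangle)$ over the constrained set, and the polarization identity $|a+b|^2 - |a-b|^2 = 4\,\mathrm{Re}(a\bar b)$ applied to $a = \langle\psi|\vec{\sigma}\rangle$ and $b = \langle\psi|\hat{V}_{\widetilde{\Gamma}}|\vec{\sigma}\rangle$ produces exactly the summand $\tfrac{1}{2}(|\langle\psi|\varphi^+_{\widetilde{\Gamma}}(\vec{\sigma})\rangle|^2 - |\langle\psi|\varphi^-_{\widetilde{\Gamma}}(\vec{\sigma})\rangle|^2)$ with $|\varphi^\pm_{\widetilde{\Gamma}}(\vec{\sigma})\rangle$ as defined in Eq.~\eqref{WilsonCat}, completing the proof.
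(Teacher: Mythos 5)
Your proof is correct, and it reaches the Lemma by a genuinely different route from the paper's. The paper's Appendix~\ref{App1} works in the Schr\"odinger picture at the level of amplitudes: it expands $|\psi\rangle=\sum_{\vec{\sigma}}c_{\vec{\sigma}}|\vec{\sigma}\rangle$, applies the Step-2 phases $i^{\sum_j a_j w_j(\vec{\sigma})}$, transforms to the $\hat{X}$ basis, and evaluates the parity-constrained sums over outcomes $\vec{y}$ (and later over inputs $\vec{a}$) with the explicit identity Eq.~\eqref{identity2}, tracking Kronecker deltas until the cat-state overlaps appear. You work instead in the Heisenberg picture at the level of operators: the winning event for input $\vec{a}$ is the spectral projector $\tfrac{1}{2}(1+(-1)^{r}\hat{V}_{\widetilde{\Gamma}})$ with $r=\sum_i a_i/2$; the conjugation identity $\hat{V}_{\widetilde{\Gamma}}\hat{W}_i^{a_i/2}=i^{a_i}\hat{W}_i^{-a_i/2}\hat{V}_{\widetilde{\Gamma}}$ (which holds precisely because $\Gamma_i\cap\widetilde{\Gamma}$ is the single bond $t_i$) collapses the conjugated observable to $i^{\sum_i a_i}\prod_j\hat{W}_j^{a_j}\hat{V}_{\widetilde{\Gamma}}$, whose phase cancels against $(-1)^{r}$; and the input average becomes projector algebra, $\tfrac{1}{2^{T-1}}\sum_{\vec{a}\,\mathrm{even}}\prod_j\hat{W}_j^{a_j}=\Pi_++\Pi_-$ with $\Pi_\pm=\prod_j\tfrac{1}{2}(1\pm\hat{W}_j)$, followed by the same polarization step that closes the paper's proof. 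I checked the identities you invoke --- the Born-rule form of $P_{\mathrm{win}}(\vec{a})$, the conjugation relation, $(\hat{U}(\vec{a})^\dagger)^2=\prod_j\hat{W}_j^{a_j}$, $\hat{V}_{\widetilde{\Gamma}}\Pi_\pm\hat{V}_{\widetilde{\Gamma}}=\Pi_\mp$, and the reality argument --- and they are all sound. Your route is shorter and more structural: the reality of each fixed-input term is manifest (for $\sum_i a_i$ even the operator $\prod_j\hat{W}_j^{a_j}\hat{V}_{\widetilde{\Gamma}}$ is Hermitian), the role of the game's geometry is isolated in one anticommutation per single-bond intersection, it explains \emph{why} the strategy works (the branch phase $i^{a_i}$ is exactly what the target bit $(-1)^{r}$ undoes), and the conjugation identity transfers essentially verbatim to the clock-shift algebra $\hat{C}\hat{S}=\omega_M\hat{S}\hat{C}$, so your argument should generalize cleanly to the $\mathbb{Z}_M$ game. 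The paper's route, by contrast, is fully elementary and produces an explicit amplitude-level formula for $p(|\psi\rangle,\vec{a})$ at each fixed input, which is convenient if one wants to analyze imperfect or near-perfect strategies coefficient by coefficient.
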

This Lemma allows us to prove the following result.
\begin{theorem}
\label{thm1}
Let $\mathcal{S} = (\mathcal{|\psi\rangle},\mathcal{P}_{\mathrm{TC}})$ be a perfect quantum strategy for all instances of the toric code game. Then $|\psi\rangle$ is a ground state of the toric code with
\begin{equation}
|\psi\rangle \in \mathrm{span}\left\{\frac{1}{\sqrt{2}}(|00\rangle+|01\rangle),\frac{1}{\sqrt{2}}(|10\rangle+|11\rangle)\right\}.
\end{equation}
\end{theorem}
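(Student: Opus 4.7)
The plan is to feed $p_{\mathrm{qu}}(|\psi\rangle)=1$ into Lemma~\ref{Lemma1} to obtain clean algebraic conditions on $|\psi\rangle$ for a single instance, then vary the instance so as to promote those conditions to individual toric-code stabilizer constraints, and finally use the residual dual-loop condition to cut down the ground-state manifold to the claimed two-dimensional span.

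\textbf{From Lemma~\ref{Lemma1} to algebraic conditions.} For any fixed instance, $\hat V_{\widetilde{\Gamma}}$ anticommutes with each $\hat W_j$ because their supports share exactly one bond, so $\hat V_{\widetilde{\Gamma}}$ sends the joint $+1$ sector of the $\hat W_j$'s into their joint $-1$ sector. Consequently, as $\vec\sigma$ ranges over $S_+ = \{\vec\sigma : \hat W_j|\vec\sigma\rangle = |\vec\sigma\rangle\ \forall j\}$ and the sign ranges over $\pm$, the family $\{|\varphi^{\pm}_{\widetilde{\Gamma}}(\vec\sigma)\rangle\}$ is orthonormal, so Bessel's inequality yields $\sum_{\vec\sigma\in S_+}(|\langle\psi|\varphi^+\rangle|^2 + |\langle\psi|\varphi^-\rangle|^2)\le 1$. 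Combined with the equality $\sum_{\vec\sigma\in S_+}(|\langle\psi|\varphi^+\rangle|^2 - |\langle\psi|\varphi^-\rangle|^2)=1$ forced by Lemma~\ref{Lemma1} at $p_{\mathrm{qu}}=1$, this forces $\langle\psi|\varphi^-_{\widetilde\Gamma}(\vec\sigma)\rangle = 0$ for every $\vec\sigma\in S_+$ and $\sum_{\vec\sigma\in S_+}|\langle\psi|\varphi^+_{\widetilde\Gamma}(\vec\sigma)\rangle|^2 = 1$. These two statements are equivalent to
\begin{equation*}
\hat V_{\widetilde{\Gamma}}|\psi\rangle = |\psi\rangle,\qquad \hat W_i\hat W_j|\psi\rangle = |\psi\rangle\ \ \forall\, i,j.
\end{equation*}

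\textbf{Recovering the stabilizers by varying the instance.} Because $\mathcal{S}$ is perfect on \emph{every} instance, the first condition holds for every admissible horizontal dual loop $\widetilde\Gamma$, and the second for every pair of disjoint winding-one vertical loops (just embed the pair in a $T\ge 3$ instance by adding one further disjoint team). To extract single-stabilizer constraints I would use a loop-deformation trick: given a plaquette $p$, let $\Gamma$ be a straight vertical loop containing one edge of $\partial p$ and let $\Gamma'$ be the detour of $\Gamma$ along the other three edges of $\partial p$, so that $\hat W_\Gamma\hat W_{\Gamma'} = \hat A_p$. Although $\Gamma$ and $\Gamma'$ share most of their bonds, one can embed the pairs $(\Gamma,\Gamma'')$ and $(\Gamma',\Gamma'')$ into two separate valid $T\ge 3$ instances using a reference team $\Gamma''$ (plus further teams) chosen far away and hence disjoint from both. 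Multiplying the resulting $\hat W\hat W = +1$ conditions gives $\hat A_p|\psi\rangle = |\psi\rangle$. Dually, deforming a straight horizontal dual loop across a single star $s$ produces a second valid horizontal dual loop $\widetilde\Gamma'$ with $\hat V_{\widetilde\Gamma}\hat V_{\widetilde\Gamma'} = \hat B_s$, and multiplying the two $\hat V = +1$ conditions from the corresponding instances yields $\hat B_s|\psi\rangle = |\psi\rangle$.

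\textbf{Restricting to the claimed span.} The $\hat A_p$ and $\hat B_s$ conditions force $|\psi\rangle$ into the four-dimensional ground-state manifold $\mathrm{span}\{|jk\rangle\}$. On this manifold any horizontal dual Wilson loop differs from $\hat V_x$ only by a product of stars, each acting as $+1$, so $\hat V_{\widetilde\Gamma}$ acts there as $\hat V_x:|jk\rangle\mapsto|j,k\oplus 1\rangle$. The surviving condition $\hat V_{\widetilde\Gamma}|\psi\rangle = |\psi\rangle$ therefore equates the coefficients of $|j0\rangle$ and $|j1\rangle$ for each $j$, which is exactly the span asserted in the theorem.

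\textbf{Main obstacle.} The delicate point is the loop-deformation step: for every plaquette and every star one must exhibit explicit lattice-compatible deformations and verify that the deformed loops remain closed cycles of winding number one, that they continue to intersect each companion vertical loop in exactly one bond, and that a full $T\ge 3$ team assignment can be completed around them. This is essentially geometric book-keeping on $\mathcal L$ but is the technical content the appendix must fill in carefully.
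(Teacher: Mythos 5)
Your proposal is correct, and while it shares the paper's overall skeleton (apply Lemma~\ref{Lemma1} to every instance, then generate the stabilizers from products of pairs of loops), the mechanics are genuinely different. The paper never writes eigenvalue equations for $|\psi\rangle$ itself: it introduces an auxiliary state $|\phi\rangle$ with $|\psi\rangle = \tfrac{1}{\sqrt{2}}(1+\hat{V}_x)|\phi\rangle$ and $\hat{W}_y|\phi\rangle = |\phi\rangle$, proves $\hat{B}_s|\phi\rangle=|\phi\rangle$ via a normalization-plus-Cauchy--Schwarz argument showing $|\phi_1\rangle=|\phi_2\rangle$ for two dual loops with $\hat{V}_1\hat{V}_2=\hat{B}_s$, proves $\hat{A}_p|\phi\rangle=|\phi\rangle$ from two \emph{overlapping} vertical loops with $\hat{W}_1\hat{W}_2=\hat{A}_p$ both stabilizing the same $|\phi\rangle$, and finally pushes $|\phi\rangle\in\mathrm{span}\{|00\rangle,|10\rangle\}$ forward through $(1+\hat{V}_x)/\sqrt{2}$. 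You instead extract the conditions $\hat{V}_{\widetilde{\Gamma}}|\psi\rangle=|\psi\rangle$ and $\hat{W}_i\hat{W}_j|\psi\rangle=|\psi\rangle$ directly on $|\psi\rangle$ — and your orthonormality/Bessel argument here is a genuine addition, since the paper asserts the corresponding span membership from Lemma~\ref{Lemma1} without justification — and then obtain $\hat{B}_s$ and $\hat{A}_p$ purely by multiplying commuting symmetries of $|\psi\rangle$. Your route buys two things: it is more elementary (no auxiliary state, no Cauchy--Schwarz), and your reference-team trick for $\hat{A}_p$ only ever invokes pairs of loops that can genuinely coexist as teams in a single valid instance, whereas the paper's plaquette step uses two loops that overlap (hence can never be teams simultaneously) stabilizing one and the same $|\phi\rangle$, which tacitly relies on a cross-instance uniqueness of $|\phi\rangle$ that the paper does not spell out. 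What the paper's route buys is the explicit square-root representation of $|\psi\rangle$ in terms of a ground state $|\phi\rangle$ of fixed topological sector, which connects more directly to the topological cat-state form of the perfect strategy. Both proofs rest on the same geometric bookkeeping that you honestly flag (existence of the deformed loops, single-intersection with a common dual loop, completion to $T\geq 3$ teams); the paper glosses these details at the same level (``it is always possible to choose\ldots''), so this is not a gap peculiar to your argument.
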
  
\begin{proof}
Suppose $\mathcal{S} = (|\psi\rangle,\mathcal{P}_{\mathrm{TC}})$ is a perfect quantum strategy for all instances of the toric code game. We assume that the state $|\psi\rangle$ is unit normalized. By Lemma \ref{Lemma1}, it follows that
\begin{align}
\nonumber |\psi\rangle \in \bigcap_{(T, \{\Gamma_j\}_{j=1}^T,\widetilde{\Gamma})} &\mathrm{span} \bigg\{\bigg\{\frac{1}{\sqrt{2}}(1+\hat{V}_{\widetilde{\Gamma}})|\vec{\sigma}\rangle : \vec{\sigma}\in \{0,1\}^{2N}, \\
\, &\hat{W}_j|\vec{\sigma}\rangle= |\vec{\sigma}\rangle, \,j=1,2,\ldots,T\bigg\}\bigg\}.
\end{align}
Thus for any horizontal dual Wilson loop $\hat{V}_x$, it must be true that
\begin{equation}
\label{eq:phi}
|\psi\rangle = \frac{1}{\sqrt{2}}(1+\hat{V}_x)|\phi\rangle,
\end{equation}
where $|\phi\rangle$ is some state for which
\begin{equation}
\label{eq:Wilsony}
\hat{W}_y |\phi\rangle = |\phi\rangle
\end{equation}
for any vertical Wilson loop $\hat{W}_y$ that intersects $\hat{V}_x$ in a single bond. Note that unit normalization of $|\psi\rangle$ implies
\begin{equation}
1 = \langle \psi | \psi \rangle = \langle \phi | (1+\hat{V}_x)| \phi \rangle = \langle \phi | \phi \rangle,
\end{equation}
since $\hat{V}_x$ changes the $\hat{W}_y$ quantum number of $|\phi\rangle$. Now let $\hat{B}_s$ be any star operator and choose horizontal dual loops $\hat{V}_1$ and $\hat{V}_2$ such that $\hat{V}_1\hat{V}_2=\hat{B}_s$. It follows that
\begin{equation}
|\psi\rangle = \frac{1}{\sqrt{2}}(1+\hat{V}_1)|\phi_1\rangle = \frac{1}{\sqrt{2}}(1+\hat{V}_2)|\phi_2\rangle
\end{equation}
for some $|\phi_1\rangle, \, |\phi_2\rangle$ on which any allowed $\hat{W}_y=1$, with $\langle \phi_1 | \phi_1 \rangle = \langle \phi_2 | \phi_2 \rangle = 1$. Normalization of $|\psi\rangle$ can be written as
\begin{equation}
\frac{1}{2}\langle \phi_1| 1 + \hat{V}_1 + \hat{V}_2 + \hat{V}_1 \hat{V_2} | \phi_2 \rangle = 1.
\end{equation}
Since $\hat{V}_{1}$ and $\hat{V}_2$ individually change the $\hat{W}_y$ quantum number of $|\phi_2\rangle$, we deduce that
\begin{equation}
\langle \phi_1 | \phi_2 \rangle + \langle \phi_1| \hat{B}_s | \phi_2 \rangle = 2.
\end{equation}
But this equality requires both terms to attain their maximum possible values consistent with the Cauchy-Schwartz inequality, so that $\langle \phi_1 | \phi_2 \rangle = 1$ and $\langle \phi_1 | \hat{B}_s | \phi_2 \rangle = 1$. The first relation implies that
\begin{equation}
|\phi_2\rangle = |\phi_1\rangle,
\end{equation}
proving that the state $|\phi\rangle$ in Eq. \eqref{eq:phi} is independent of the choice of horizontal dual loop $\hat{V}$, while the second relation implies that
\begin{equation}
\label{eq:star}
\hat{B}_s|\phi\rangle = |\phi\rangle.
\end{equation}
We next let $\hat{A}_p$ be any plaquette operator and choose vertical Wilson loops $\hat{W}_1$ and $\hat{W}_2$ such that $\hat{W}_1 \hat{W}_2 = \hat{A}_p$. Note that it is always possible to choose $\hat{W}_1$, $\hat{W}_2$ and a dual loop $\hat{V}_x$ that intersects both $\hat{W}_1$ and $\hat{W}_2$ in a single bond, so that $|\psi\rangle = \frac{1}{\sqrt{2}}(1+\hat{V}_x)|\phi\rangle$ with $\hat{W}_1 |\phi\rangle = \hat{W}_2 |\phi\rangle = |\phi\rangle$. It follows that
\begin{equation}
\label{eq:plaq}
\hat{A}_p |\phi\rangle = \hat{W}_1\hat{W}_2|\phi\rangle = |\phi\rangle.
\end{equation}
Equations Eq. \eqref{eq:star} and \eqref{eq:plaq} guarantee that $|\phi\rangle$ is locally a ground state of the toric code Hamiltonian. Equation Eq. \eqref{eq:Wilsony} then fixes the quantum number of vertical Wilson loops, so that
\begin{equation}
|\phi\rangle \in \mathrm{span}\{|00\rangle,|10\rangle\}.
\end{equation}
The result follows by Eq. \eqref{eq:phi}.
\end{proof}

 We note that an analogous result holds for the doubled semion state\cite{StringNet}. To be precise, just as a quantum winning strategy using the Wilson loops of the toric code serves to determine toric code ground states, the analogous quantum winning strategy using Wilson loops of the doubled semion model can be used to determine doubled semion ground  states.

\section{$\mathbb{Z}_M$ toric code games}

In this section, we introduce nonlocal games that can be won with certainty if the players share a ground state of the toric code with $\mathbb{Z}_M$ topological order ($M > 2$) before playing the game. These games are based on Boyer's modulo $M$ divisor $M$ generalizations of the parity game\cite{boyer2004extended}. At the level of assigning players to bonds and teams, the rules of this ``$\mathbb{Z}_M$ toric code game'' are identical to those of the toric code game described in Section \ref{sec:TCGame}. However, in the $\mathbb{Z}_M$ toric code game, the verifier provides each team with a number $a_i \in \{0,1,\ldots,M-1\}$, rather than a single bit. 
Similarly, to win the game, the players at each bond of $\widetilde{\Gamma}$ must return numbers $y_b \in \{0,1,\ldots,M-1\}$ to the verifier such that 
\begin{equation}
\label{eq:zmwincond}
\sum_{b \in \widetilde{\Gamma}} y_b \equiv \frac{\sum_{i=1}^T a_i}{M} \,\mathrm{mod}\,{M}.
\end{equation}
An example of a valid arrangement of players is depicted in Fig. \ref{Fig1}. This recovers the toric code game in the case $M=2$.

To define the $\mathbb{Z}_M$ toric code Hamiltonian, let $\omega_{M} = e^{i2\pi/M}$ denote an $M$th root of unity and define $\mathbb{Z}_M$ ``clock'' and ``shift'' operators 
\begin{align}
\hat{C} = \begin{pmatrix} 
1 & 0 & \ldots & 0 \\
0 & \omega_M & \ldots & 0 \\
0 & 0 & \ddots & 0 \\
0 & 0 & \ldots & \omega_M^{M-1}
\end{pmatrix},\quad \hat{S} = \begin{pmatrix} 
0 & 0 & \ldots & 1 \\
1 & 0 & \ldots & 0 \\
0 & 1 & \ddots & 0 \\
0 & 0 & \ldots & 0
\end{pmatrix},
\end{align}
which satisfy the commutation relations $\hat{C}\hat{S} = \omega_M \hat{S}\hat{C}$

We next label plaquettes and stars as shown in Fig. \ref{Fig2} and define plaquette and star operators $\hat{A}_p = \frac{1}{2}(\hat{C}_{p_1}^\dagger \hat{C}_{p_2}^\dagger \hat{C}_{p_3} \hat{C}_{p_4} + \mathrm{h.c.})$ and $\hat{B}_s = \frac{1}{2}(\hat{S}_{s_1}^\dagger \hat{S}_{s_2}^\dagger \hat{S}_{s_3} \hat{S}_{s_4} + \mathrm{h.c.})$, which mutually commute thanks to the labelling convention. In terms of these operators, the Hamiltonian of the $\mathbb{Z}_M$ toric code is given by\cite{KITAEV20032}
\begin{equation}
\hat{H} = - K \sum_{p}\frac{1}{M} \sum_{j=0}^{M-1} (\hat{A}_p)^j - K' \sum_s \frac{1}{M}\sum_{j=0}^{M-1} (\hat{B}_s)^j
\end{equation}
with $K,K'>0$.

The zero-flux ground state can be obtained by projecting a $\hat{C}_b=1$ product state into a $\hat{B}_s=1$ state, and we can write it as
\begin{equation}
|00\rangle = \mathcal{N} \prod_s \frac{1}{M} \sum_{j=0}^{M-1} (\hat{B}_s)^j \bigotimes_{b=1}^N |\hat{C}_b = 1\rangle.
\end{equation}

To label all $M^2$ ground states of $\hat{H}$, we introduce ``clock'' Wilson loop operators $
\hat{W}_{x/y} = \prod_{b \in \Gamma_{x/y}} \hat{C}_b$ and their dual ``shift'' Wilson loop operators $
\hat{V}_{x/y} = \prod_{b \in \widetilde{\Gamma}_{x/y}} \hat{S}_b$,
% \begin{equation}
% \hat{W}_x = \prod_{b \in \Gamma_x} \hat{C}_b, \quad \hat{W}_y= \prod_{b \in \Gamma_y} \hat{C}_b
% \end{equation}
% and their dual ``shift'' Wilson loop operators
% \begin{equation}
% \hat{V}_x = \prod_{b \in \widetilde{\Gamma}_x} \hat{S}_b, \quad \hat{V}_x= \prod_{b \in \widetilde{\Gamma}_y} \hat{S}_b,
% \end{equation}
with labelling conventions for loops and dual loops as in Fig. \ref{Fig2}. The degenerate ground states $|jk\rangle$ can be labelled by the eigenvalues
\begin{equation}
\hat{W}_x |jk\rangle = \omega_M^j |jk\rangle, \quad \hat{W}_y |jk\rangle = \omega_M^k |jk\rangle
\end{equation}
of the clock loops, and are obtained from the zero-flux ground state by repeated action of the shift loops, namely $|jk\rangle = (\hat{V}_y)^j(\hat{V}_x)^k|00\rangle$.

Then the following quantum strategy always wins:
\begin{enumerate}
    \item Before playing the game, all players share the state
    \begin{equation}
    |\psi \rangle = \frac{1}{\sqrt{M}} \left(|00\rangle + |01\rangle + \ldots |0M-1 \rangle\right).
    \end{equation}
    \item Team $i$ acts with an $M$th root $\hat{W}_i^{-a_i/M}$ of their clock Wilson loop $\hat{W}_i = \prod_{b \in \Gamma_i} \hat{C}_b$.
    \item The players on the bonds of the dual loop $\widetilde{\Gamma}$ each measure their qudit in the shift basis and return its value $y_b$.
\end{enumerate}
Let us briefly verify that this always yields a perfect quantum strategy for the $\mathbb{Z}_M$ toric code game. After Step 2, the new state is
\begin{equation}
|\psi'\rangle  = \frac{1}{\sqrt{M}} \sum_{k=0}^{M-1}  \omega_{M}^{-k\sum_{i=1}^T a_i/M} |0k\rangle 
\end{equation}
by definition of the ground states $|jk\rangle$. This is an eigenstate of the shift Wilson loop $\hat{V}_{\widetilde{\Gamma}} = \prod_{b\in\widetilde{\Gamma}} \hat{S}_b$ with eigenvalue $\omega_{M}^{\sum_{i=1}^T a_i/M}$. Step 3 yields numbers $y_b \in \{0,1,\ldots,M-1\}$ with the property that $\omega_M^{\sum_{b\in\widetilde{\Gamma}} y_b}$ is the eigenvalue of the shift Wilson loop $\hat{V}_{\widetilde{\Gamma}}$ acting on the final state. But since the state $|\psi'\rangle$ was already an eigenstate of that shift Wilson loop, it follows that
\begin{equation}
\omega_M^{\sum_{b\in\widetilde{\Gamma}} y_b}=\omega_{M}^{\sum_{i=1}^T a_i/M},
\end{equation}
which implies the winning condition Eq. \eqref{eq:wincond}.

Our observation in Section \ref{sec:TCGame} that the optimal classical strategy on the intersecting bonds $t_i$ induces an optimal classical strategy for the toric code game applies equally to the $\mathbb{Z}_M$ case, and demonstrates that the $\mathbb{Z}_M$ toric code game is a nonlocal game for a lattice $\mathcal{L}$ that is more than three qubits wide. This is because the $T$ qudit modulo $M$ divisor $M$ game is a nonlocal game\cite{boyer2004extended} for $T \geq 3$ qudits. For the ``most quantum'' instance of the $\mathbb{Z}_M$ toric code game with $T = \mathcal{O}(\sqrt{N})$ teams, it also follows\cite{companion} that the probability of winning for the optimal classical strategy
\begin{equation}
p^*_{\mathrm{cl}} \to 1/M, \quad N \to \infty,
\end{equation}
with corrections that are exponentially small in $\sqrt{N}$.

Similarly, the $\mathbb{Z}_M$ toric code game has all the freedom of the toric code game described in section \ref{sec:TCGame}: it can be played on any surface of genus greater than one, on any lattice geometry, with teams assigned according to any generator of the fundamental group, or even with multiple sets of teams that are associated with different fundamental group generators and play simultaneously.

\section{Conclusion}

We have proposed a nonlocal game that can be won with certainty if the players share ground states of the toric code before playing the game. This game has several features that distinguish it from previous examples of nonlocal games, including the spatial arrangement of the players on a lattice, the dynamical organization of players into ``teams'' and the topological invariance of quantum winning strategies under continuous deformations of these teams. We have further shown that a perfect quantum strategy for the toric code game can be used to uniquely determine toric code states (see Theorem \ref{thm1}), and exhibited generalizations of the toric code game to toric code states with $\mathbb{Z}_M$ topological order.

The former result can be seen as a first step towards device-independent self-testing\cite{SelfTest,DeviceIndependent,SelfTestVaz,MagicSelfTesting,Grilo} or certification of non-trivial ground states of condensed matter Hamiltonians. A desirable goal would be to extend Theorem \ref{thm1} to a full ``rigidity'' result of the type proved for the two-singlet state\cite{MagicSelfTesting}, whereby the rules of the toric code game alone could suffice to determine the toric code ground state, up to local unitary operations. We emphasize that the latter ambiguity cannot be removed; for example, if a set of quantum players share a toric code eigenstate containing some number of electric and magnetic excitations, they can always win the toric code game with certainty by applying suitably chosen unitaries\cite{LaumannKitaev} at their bonds to remove these excitations, before applying the perfect quantum strategy described in Section \ref{sec:TCGame}.

More broadly, the possibility of using quantum games to certify ground states of realistic physical systems up to local unitary operations suggests a novel information-theoretic approach to the problem of classifying phases of matter. A natural question related to this idea is whether games can be used to characterize states with non-Abelian global symmetries, such as Pfaffian quantum Hall states, beyond the Abelian examples studied in this paper.

\section{Acknowledgments}
We thank I. Arad, D. S. Borgnia, A. B. Grilo, R. M. Nandkishore, and especially U. V. Vazirani for helpful discussions. V. B. B. is supported by a fellowship at the Princeton Center for Theoretical Science. F. J. B. is supported by NSF DMR-1928166, and is grateful to the Carnegie Corporation of New York and the Institute for Advanced Study, where part of this work was carried out. This work was supported by a Leverhulme Trust International Professorship grant number LIP-202-014 (S. L. S). For the purpose of Open Access, the author has applied a CC BY public copyright licence to any Author Accepted Manuscript version arising from this submission.

\bibliography{games.bib}

%apsrev4-2.bst 2019-01-14 (MD) hand-edited version of apsrev4-1.bst
%Control: key (0)
%Control: author (72) initials jnrlst
%Control: editor formatted (1) identically to author
%Control: production of article title (-1) disabled
%Control: page (0) single
%Control: year (1) truncated
%Control: production of eprint (0) enabled
\begin{thebibliography}{31}%
\makeatletter
\providecommand \@ifxundefined [1]{%
 \@ifx{#1\undefined}
}%
\providecommand \@ifnum [1]{%
 \ifnum #1\expandafter \@firstoftwo
 \else \expandafter \@secondoftwo
 \fi
}%
\providecommand \@ifx [1]{%
 \ifx #1\expandafter \@firstoftwo
 \else \expandafter \@secondoftwo
 \fi
}%
\providecommand \natexlab [1]{#1}%
\providecommand \enquote  [1]{``#1''}%
\providecommand \bibnamefont  [1]{#1}%
\providecommand \bibfnamefont [1]{#1}%
\providecommand \citenamefont [1]{#1}%
\providecommand \href@noop [0]{\@secondoftwo}%
\providecommand \href [0]{\begingroup \@sanitize@url \@href}%
\providecommand \@href[1]{\@@startlink{#1}\@@href}%
\providecommand \@@href[1]{\endgroup#1\@@endlink}%
\providecommand \@sanitize@url [0]{\catcode `\\12\catcode `\$12\catcode
  `\&12\catcode `\#12\catcode `\^12\catcode `\_12\catcode `\%12\relax}%
\providecommand \@@startlink[1]{}%
\providecommand \@@endlink[0]{}%
\providecommand \url  [0]{\begingroup\@sanitize@url \@url }%
\providecommand \@url [1]{\endgroup\@href {#1}{\urlprefix }}%
\providecommand \urlprefix  [0]{URL }%
\providecommand \Eprint [0]{\href }%
\providecommand \doibase [0]{https://doi.org/}%
\providecommand \selectlanguage [0]{\@gobble}%
\providecommand \bibinfo  [0]{\@secondoftwo}%
\providecommand \bibfield  [0]{\@secondoftwo}%
\providecommand \translation [1]{[#1]}%
\providecommand \BibitemOpen [0]{}%
\providecommand \bibitemStop [0]{}%
\providecommand \bibitemNoStop [0]{.\EOS\space}%
\providecommand \EOS [0]{\spacefactor3000\relax}%
\providecommand \BibitemShut  [1]{\csname bibitem#1\endcsname}%
\let\auto@bib@innerbib\@empty
%</preamble>
\bibitem [{\citenamefont {Brassard}\ \emph
  {et~al.}(2005{\natexlab{a}})\citenamefont {Brassard}, \citenamefont
  {Broadbent},\ and\ \citenamefont {Tapp}}]{Brassard2005}%
  \BibitemOpen
  \bibfield  {author} {\bibinfo {author} {\bibfnamefont {G.}~\bibnamefont
  {Brassard}}, \bibinfo {author} {\bibfnamefont {A.}~\bibnamefont
  {Broadbent}},\ and\ \bibinfo {author} {\bibfnamefont {A.}~\bibnamefont
  {Tapp}},\ }\href {https://doi.org/10.1007/s10701-005-7353-4} {\bibfield
  {journal} {\bibinfo  {journal} {Foundations of Physics}\ }\textbf {\bibinfo
  {volume} {35}},\ \bibinfo {pages} {1877} (\bibinfo {year}
  {2005}{\natexlab{a}})}\BibitemShut {NoStop}%
\bibitem [{\citenamefont {Mermin}(1990{\natexlab{a}})}]{MerminMagic}%
  \BibitemOpen
  \bibfield  {author} {\bibinfo {author} {\bibfnamefont {N.~D.}\ \bibnamefont
  {Mermin}},\ }\href {https://doi.org/10.1103/PhysRevLett.65.3373} {\bibfield
  {journal} {\bibinfo  {journal} {Phys. Rev. Lett.}\ }\textbf {\bibinfo
  {volume} {65}},\ \bibinfo {pages} {3373} (\bibinfo {year}
  {1990}{\natexlab{a}})}\BibitemShut {NoStop}%
\bibitem [{\citenamefont {Peres}(1990)}]{PeresMagic}%
  \BibitemOpen
  \bibfield  {author} {\bibinfo {author} {\bibfnamefont {A.}~\bibnamefont
  {Peres}},\ }\href
  {https://doi.org/https://doi.org/10.1016/0375-9601(90)90172-K} {\bibfield
  {journal} {\bibinfo  {journal} {Physics Letters A}\ }\textbf {\bibinfo
  {volume} {151}},\ \bibinfo {pages} {107} (\bibinfo {year}
  {1990})}\BibitemShut {NoStop}%
\bibitem [{\citenamefont {Mermin}(1990{\natexlab{b}})}]{mermin1990extreme}%
  \BibitemOpen
  \bibfield  {author} {\bibinfo {author} {\bibfnamefont {N.~D.}\ \bibnamefont
  {Mermin}},\ }\href@noop {} {\bibfield  {journal} {\bibinfo  {journal}
  {Physical Review Letters}\ }\textbf {\bibinfo {volume} {65}},\ \bibinfo
  {pages} {1838} (\bibinfo {year} {1990}{\natexlab{b}})}\BibitemShut {NoStop}%
\bibitem [{\citenamefont {Brassard}\ \emph
  {et~al.}(2005{\natexlab{b}})\citenamefont {Brassard}, \citenamefont
  {Broadbent},\ and\ \citenamefont {Tapp}}]{brassard2005recasting}%
  \BibitemOpen
  \bibfield  {author} {\bibinfo {author} {\bibfnamefont {G.}~\bibnamefont
  {Brassard}}, \bibinfo {author} {\bibfnamefont {A.}~\bibnamefont
  {Broadbent}},\ and\ \bibinfo {author} {\bibfnamefont {A.}~\bibnamefont
  {Tapp}},\ }\href@noop {} {\bibinfo {title} {{Recasting Mermin's multi-player
  game into the framework of pseudo-telepathy}}} (\bibinfo {year}
  {2005}{\natexlab{b}}),\ \Eprint {https://arxiv.org/abs/quant-ph/0408052}
  {arXiv:quant-ph/0408052 [quant-ph]} \BibitemShut {NoStop}%
\bibitem [{\citenamefont {Mayers}\ and\ \citenamefont {Yao}(2003)}]{SelfTest}%
  \BibitemOpen
  \bibfield  {author} {\bibinfo {author} {\bibfnamefont {D.}~\bibnamefont
  {Mayers}}\ and\ \bibinfo {author} {\bibfnamefont {A.}~\bibnamefont {Yao}},\
  }\href {https://doi.org/10.48550/ARXIV.QUANT-PH/0307205} {\bibinfo {title}
  {Self testing quantum apparatus}} (\bibinfo {year} {2003})\BibitemShut
  {NoStop}%
\bibitem [{\citenamefont {Ac\'{\i}n}\ \emph {et~al.}(2007)\citenamefont
  {Ac\'{\i}n}, \citenamefont {Brunner}, \citenamefont {Gisin}, \citenamefont
  {Massar}, \citenamefont {Pironio},\ and\ \citenamefont
  {Scarani}}]{DeviceIndependent}%
  \BibitemOpen
  \bibfield  {author} {\bibinfo {author} {\bibfnamefont {A.}~\bibnamefont
  {Ac\'{\i}n}}, \bibinfo {author} {\bibfnamefont {N.}~\bibnamefont {Brunner}},
  \bibinfo {author} {\bibfnamefont {N.}~\bibnamefont {Gisin}}, \bibinfo
  {author} {\bibfnamefont {S.}~\bibnamefont {Massar}}, \bibinfo {author}
  {\bibfnamefont {S.}~\bibnamefont {Pironio}},\ and\ \bibinfo {author}
  {\bibfnamefont {V.}~\bibnamefont {Scarani}},\ }\href
  {https://doi.org/10.1103/PhysRevLett.98.230501} {\bibfield  {journal}
  {\bibinfo  {journal} {Phys. Rev. Lett.}\ }\textbf {\bibinfo {volume} {98}},\
  \bibinfo {pages} {230501} (\bibinfo {year} {2007})}\BibitemShut {NoStop}%
\bibitem [{\citenamefont {Reichardt}\ \emph {et~al.}(2013)\citenamefont
  {Reichardt}, \citenamefont {Unger},\ and\ \citenamefont
  {Vazirani}}]{SelfTestVaz}%
  \BibitemOpen
  \bibfield  {author} {\bibinfo {author} {\bibfnamefont {B.~W.}\ \bibnamefont
  {Reichardt}}, \bibinfo {author} {\bibfnamefont {F.}~\bibnamefont {Unger}},\
  and\ \bibinfo {author} {\bibfnamefont {U.}~\bibnamefont {Vazirani}},\
  }\href@noop {} {\bibfield  {journal} {\bibinfo  {journal} {Nature}\ }\textbf
  {\bibinfo {volume} {496}},\ \bibinfo {pages} {456} (\bibinfo {year}
  {2013})}\BibitemShut {NoStop}%
\bibitem [{\citenamefont {Wu}\ \emph {et~al.}(2016)\citenamefont {Wu},
  \citenamefont {Bancal}, \citenamefont {McKague},\ and\ \citenamefont
  {Scarani}}]{MagicSelfTesting}%
  \BibitemOpen
  \bibfield  {author} {\bibinfo {author} {\bibfnamefont {X.}~\bibnamefont
  {Wu}}, \bibinfo {author} {\bibfnamefont {J.-D.}\ \bibnamefont {Bancal}},
  \bibinfo {author} {\bibfnamefont {M.}~\bibnamefont {McKague}},\ and\ \bibinfo
  {author} {\bibfnamefont {V.}~\bibnamefont {Scarani}},\ }\href
  {https://doi.org/10.1103/PhysRevA.93.062121} {\bibfield  {journal} {\bibinfo
  {journal} {Phys. Rev. A}\ }\textbf {\bibinfo {volume} {93}},\ \bibinfo
  {pages} {062121} (\bibinfo {year} {2016})}\BibitemShut {NoStop}%
\bibitem [{\citenamefont {Grilo}(2017)}]{Grilo}%
  \BibitemOpen
  \bibfield  {author} {\bibinfo {author} {\bibfnamefont {A.~B.}\ \bibnamefont
  {Grilo}},\ }\href {https://doi.org/10.48550/ARXIV.1711.09585} {\bibinfo
  {title} {A simple protocol for verifiable delegation of quantum computation
  in one round}} (\bibinfo {year} {2017})\BibitemShut {NoStop}%
\bibitem [{\citenamefont {Spekkens}(2008)}]{Spekkens_2008}%
  \BibitemOpen
  \bibfield  {author} {\bibinfo {author} {\bibfnamefont {R.~W.}\ \bibnamefont
  {Spekkens}},\ }\bibfield  {journal} {\bibinfo  {journal} {Physical Review
  Letters}\ }\textbf {\bibinfo {volume} {101}},\ \href
  {https://doi.org/10.1103/physrevlett.101.020401}
  {10.1103/physrevlett.101.020401} (\bibinfo {year} {2008})\BibitemShut
  {NoStop}%
\bibitem [{\citenamefont {Abramsky}\ and\ \citenamefont
  {Brandenburger}(2011)}]{Abramsky_2011}%
  \BibitemOpen
  \bibfield  {author} {\bibinfo {author} {\bibfnamefont {S.}~\bibnamefont
  {Abramsky}}\ and\ \bibinfo {author} {\bibfnamefont {A.}~\bibnamefont
  {Brandenburger}},\ }\href {https://doi.org/10.1088/1367-2630/13/11/113036}
  {\bibfield  {journal} {\bibinfo  {journal} {New Journal of Physics}\ }\textbf
  {\bibinfo {volume} {13}},\ \bibinfo {pages} {113036} (\bibinfo {year}
  {2011})}\BibitemShut {NoStop}%
\bibitem [{\citenamefont {Howard}\ \emph {et~al.}(2014)\citenamefont {Howard},
  \citenamefont {Wallman}, \citenamefont {Veitch},\ and\ \citenamefont
  {Emerson}}]{ContextualityMagic}%
  \BibitemOpen
  \bibfield  {author} {\bibinfo {author} {\bibfnamefont {M.}~\bibnamefont
  {Howard}}, \bibinfo {author} {\bibfnamefont {J.}~\bibnamefont {Wallman}},
  \bibinfo {author} {\bibfnamefont {V.}~\bibnamefont {Veitch}},\ and\ \bibinfo
  {author} {\bibfnamefont {J.}~\bibnamefont {Emerson}},\ }\href
  {https://doi.org/10.1038/nature13460} {\bibfield  {journal} {\bibinfo
  {journal} {Nature}\ }\textbf {\bibinfo {volume} {510}},\ \bibinfo {pages}
  {351} (\bibinfo {year} {2014})}\BibitemShut {NoStop}%
\bibitem [{\citenamefont {Van~den Nest}(2013)}]{EntUseless}%
  \BibitemOpen
  \bibfield  {author} {\bibinfo {author} {\bibfnamefont {M.}~\bibnamefont
  {Van~den Nest}},\ }\href {https://doi.org/10.1103/PhysRevLett.110.060504}
  {\bibfield  {journal} {\bibinfo  {journal} {Phys. Rev. Lett.}\ }\textbf
  {\bibinfo {volume} {110}},\ \bibinfo {pages} {060504} (\bibinfo {year}
  {2013})}\BibitemShut {NoStop}%
\bibitem [{\citenamefont {Chiara}\ and\ \citenamefont
  {Sanpera}(2018)}]{QuantReview}%
  \BibitemOpen
  \bibfield  {author} {\bibinfo {author} {\bibfnamefont {G.~D.}\ \bibnamefont
  {Chiara}}\ and\ \bibinfo {author} {\bibfnamefont {A.}~\bibnamefont
  {Sanpera}},\ }\href {https://doi.org/10.1088/1361-6633/aabf61} {\bibfield
  {journal} {\bibinfo  {journal} {Reports on Progress in Physics}\ }\textbf
  {\bibinfo {volume} {81}},\ \bibinfo {pages} {074002} (\bibinfo {year}
  {2018})}\BibitemShut {NoStop}%
\bibitem [{\citenamefont {Barrett}\ \emph {et~al.}(2007)\citenamefont
  {Barrett}, \citenamefont {Caves}, \citenamefont {Eastin}, \citenamefont
  {Elliott},\ and\ \citenamefont {Pironio}}]{Cluster}%
  \BibitemOpen
  \bibfield  {author} {\bibinfo {author} {\bibfnamefont {J.}~\bibnamefont
  {Barrett}}, \bibinfo {author} {\bibfnamefont {C.~M.}\ \bibnamefont {Caves}},
  \bibinfo {author} {\bibfnamefont {B.}~\bibnamefont {Eastin}}, \bibinfo
  {author} {\bibfnamefont {M.~B.}\ \bibnamefont {Elliott}},\ and\ \bibinfo
  {author} {\bibfnamefont {S.}~\bibnamefont {Pironio}},\ }\bibfield  {journal}
  {\bibinfo  {journal} {Physical Review A}\ }\textbf {\bibinfo {volume} {75}},\
  \href {https://doi.org/10.1103/physreva.75.012103}
  {10.1103/physreva.75.012103} (\bibinfo {year} {2007})\BibitemShut {NoStop}%
\bibitem [{\citenamefont {Deng}\ \emph {et~al.}(2012)\citenamefont {Deng},
  \citenamefont {Wu}, \citenamefont {Chen}, \citenamefont {Gu}, \citenamefont
  {Yu},\ and\ \citenamefont {Oh}}]{BellNonloc}%
  \BibitemOpen
  \bibfield  {author} {\bibinfo {author} {\bibfnamefont {D.-L.}\ \bibnamefont
  {Deng}}, \bibinfo {author} {\bibfnamefont {C.}~\bibnamefont {Wu}}, \bibinfo
  {author} {\bibfnamefont {J.-L.}\ \bibnamefont {Chen}}, \bibinfo {author}
  {\bibfnamefont {S.-J.}\ \bibnamefont {Gu}}, \bibinfo {author} {\bibfnamefont
  {S.}~\bibnamefont {Yu}},\ and\ \bibinfo {author} {\bibfnamefont {C.~H.}\
  \bibnamefont {Oh}},\ }\href {https://doi.org/10.1103/PhysRevA.86.032305}
  {\bibfield  {journal} {\bibinfo  {journal} {Phys. Rev. A}\ }\textbf {\bibinfo
  {volume} {86}},\ \bibinfo {pages} {032305} (\bibinfo {year}
  {2012})}\BibitemShut {NoStop}%
\bibitem [{\citenamefont {Tura}\ \emph {et~al.}(2014)\citenamefont {Tura},
  \citenamefont {Augusiak}, \citenamefont {Sainz}, \citenamefont {Vértesi},
  \citenamefont {Lewenstein},\ and\ \citenamefont {Acín}}]{DetectNonlocality}%
  \BibitemOpen
  \bibfield  {author} {\bibinfo {author} {\bibfnamefont {J.}~\bibnamefont
  {Tura}}, \bibinfo {author} {\bibfnamefont {R.}~\bibnamefont {Augusiak}},
  \bibinfo {author} {\bibfnamefont {A.~B.}\ \bibnamefont {Sainz}}, \bibinfo
  {author} {\bibfnamefont {T.}~\bibnamefont {Vértesi}}, \bibinfo {author}
  {\bibfnamefont {M.}~\bibnamefont {Lewenstein}},\ and\ \bibinfo {author}
  {\bibfnamefont {A.}~\bibnamefont {Acín}},\ }\href
  {https://doi.org/10.1126/science.1247715} {\bibfield  {journal} {\bibinfo
  {journal} {Science}\ }\textbf {\bibinfo {volume} {344}},\ \bibinfo {pages}
  {1256} (\bibinfo {year} {2014})},\ \Eprint
  {https://arxiv.org/abs/https://www.science.org/doi/pdf/10.1126/science.1247715}
  {https://www.science.org/doi/pdf/10.1126/science.1247715} \BibitemShut
  {NoStop}%
\bibitem [{\citenamefont {Schmied}\ \emph {et~al.}(2016)\citenamefont
  {Schmied}, \citenamefont {Bancal}, \citenamefont {Allard}, \citenamefont
  {Fadel}, \citenamefont {Scarani}, \citenamefont {Treutlein},\ and\
  \citenamefont {Sangouard}}]{BECBell}%
  \BibitemOpen
  \bibfield  {author} {\bibinfo {author} {\bibfnamefont {R.}~\bibnamefont
  {Schmied}}, \bibinfo {author} {\bibfnamefont {J.-D.}\ \bibnamefont {Bancal}},
  \bibinfo {author} {\bibfnamefont {B.}~\bibnamefont {Allard}}, \bibinfo
  {author} {\bibfnamefont {M.}~\bibnamefont {Fadel}}, \bibinfo {author}
  {\bibfnamefont {V.}~\bibnamefont {Scarani}}, \bibinfo {author} {\bibfnamefont
  {P.}~\bibnamefont {Treutlein}},\ and\ \bibinfo {author} {\bibfnamefont
  {N.}~\bibnamefont {Sangouard}},\ }\href
  {https://doi.org/10.1126/science.aad8665} {\bibfield  {journal} {\bibinfo
  {journal} {Science}\ }\textbf {\bibinfo {volume} {352}},\ \bibinfo {pages}
  {441} (\bibinfo {year} {2016})},\ \Eprint
  {https://arxiv.org/abs/https://www.science.org/doi/pdf/10.1126/science.aad8665}
  {https://www.science.org/doi/pdf/10.1126/science.aad8665} \BibitemShut
  {NoStop}%
\bibitem [{\citenamefont {Pelisson}\ \emph {et~al.}(2016)\citenamefont
  {Pelisson}, \citenamefont {Pezz\`e},\ and\ \citenamefont
  {Smerzi}}]{ColdAtomBell}%
  \BibitemOpen
  \bibfield  {author} {\bibinfo {author} {\bibfnamefont {S.}~\bibnamefont
  {Pelisson}}, \bibinfo {author} {\bibfnamefont {L.}~\bibnamefont {Pezz\`e}},\
  and\ \bibinfo {author} {\bibfnamefont {A.}~\bibnamefont {Smerzi}},\ }\href
  {https://doi.org/10.1103/PhysRevA.93.022115} {\bibfield  {journal} {\bibinfo
  {journal} {Phys. Rev. A}\ }\textbf {\bibinfo {volume} {93}},\ \bibinfo
  {pages} {022115} (\bibinfo {year} {2016})}\BibitemShut {NoStop}%
\bibitem [{\citenamefont {Tura}\ \emph {et~al.}(2017)\citenamefont {Tura},
  \citenamefont {De~las Cuevas}, \citenamefont {Augusiak}, \citenamefont
  {Lewenstein}, \citenamefont {Ac\'{\i}n},\ and\ \citenamefont
  {Cirac}}]{EnergyNonloc}%
  \BibitemOpen
  \bibfield  {author} {\bibinfo {author} {\bibfnamefont {J.}~\bibnamefont
  {Tura}}, \bibinfo {author} {\bibfnamefont {G.}~\bibnamefont {De~las Cuevas}},
  \bibinfo {author} {\bibfnamefont {R.}~\bibnamefont {Augusiak}}, \bibinfo
  {author} {\bibfnamefont {M.}~\bibnamefont {Lewenstein}}, \bibinfo {author}
  {\bibfnamefont {A.}~\bibnamefont {Ac\'{\i}n}},\ and\ \bibinfo {author}
  {\bibfnamefont {J.~I.}\ \bibnamefont {Cirac}},\ }\href
  {https://doi.org/10.1103/PhysRevX.7.021005} {\bibfield  {journal} {\bibinfo
  {journal} {Phys. Rev. X}\ }\textbf {\bibinfo {volume} {7}},\ \bibinfo {pages}
  {021005} (\bibinfo {year} {2017})}\BibitemShut {NoStop}%
\bibitem [{\citenamefont {Bravyi}\ \emph {et~al.}(2018)\citenamefont {Bravyi},
  \citenamefont {Gosset},\ and\ \citenamefont {König}}]{Bravyi}%
  \BibitemOpen
  \bibfield  {author} {\bibinfo {author} {\bibfnamefont {S.}~\bibnamefont
  {Bravyi}}, \bibinfo {author} {\bibfnamefont {D.}~\bibnamefont {Gosset}},\
  and\ \bibinfo {author} {\bibfnamefont {R.}~\bibnamefont {König}},\ }\href
  {https://doi.org/10.1126/science.aar3106} {\bibfield  {journal} {\bibinfo
  {journal} {Science}\ }\textbf {\bibinfo {volume} {362}},\ \bibinfo {pages}
  {308} (\bibinfo {year} {2018})},\ \Eprint
  {https://arxiv.org/abs/https://www.science.org/doi/pdf/10.1126/science.aar3106}
  {https://www.science.org/doi/pdf/10.1126/science.aar3106} \BibitemShut
  {NoStop}%
\bibitem [{\citenamefont {Daniel}\ and\ \citenamefont
  {Miyake}(2021)}]{DanielMiyake}%
  \BibitemOpen
  \bibfield  {author} {\bibinfo {author} {\bibfnamefont {A.~K.}\ \bibnamefont
  {Daniel}}\ and\ \bibinfo {author} {\bibfnamefont {A.}~\bibnamefont
  {Miyake}},\ }\bibfield  {journal} {\bibinfo  {journal} {Physical Review
  Letters}\ }\textbf {\bibinfo {volume} {126}},\ \href
  {https://doi.org/10.1103/physrevlett.126.090505}
  {10.1103/physrevlett.126.090505} (\bibinfo {year} {2021})\BibitemShut
  {NoStop}%
\bibitem [{\citenamefont {Sheffer}\ \emph {et~al.}(2022)\citenamefont
  {Sheffer}, \citenamefont {Azses},\ and\ \citenamefont
  {Torre}}]{Sheffer_2022}%
  \BibitemOpen
  \bibfield  {author} {\bibinfo {author} {\bibfnamefont {M.}~\bibnamefont
  {Sheffer}}, \bibinfo {author} {\bibfnamefont {D.}~\bibnamefont {Azses}},\
  and\ \bibinfo {author} {\bibfnamefont {E.~G.~D.}\ \bibnamefont {Torre}},\
  }\href {https://doi.org/10.1002/qute.202100081} {\bibfield  {journal}
  {\bibinfo  {journal} {Advanced Quantum Technologies}\ }\textbf {\bibinfo
  {volume} {5}},\ \bibinfo {pages} {2100081} (\bibinfo {year}
  {2022})}\BibitemShut {NoStop}%
\bibitem [{\citenamefont {Kitaev}(2003)}]{KITAEV20032}%
  \BibitemOpen
  \bibfield  {author} {\bibinfo {author} {\bibfnamefont {A.}~\bibnamefont
  {Kitaev}},\ }\href
  {https://doi.org/https://doi.org/10.1016/S0003-4916(02)00018-0} {\bibfield
  {journal} {\bibinfo  {journal} {Annals of Physics}\ }\textbf {\bibinfo
  {volume} {303}},\ \bibinfo {pages} {2} (\bibinfo {year} {2003})}\BibitemShut
  {NoStop}%
\bibitem [{\citenamefont {Kitaev}\ and\ \citenamefont
  {Laumann}(2009)}]{LaumannKitaev}%
  \BibitemOpen
  \bibfield  {author} {\bibinfo {author} {\bibfnamefont {A.}~\bibnamefont
  {Kitaev}}\ and\ \bibinfo {author} {\bibfnamefont {C.}~\bibnamefont
  {Laumann}},\ }\href {https://doi.org/10.48550/ARXIV.0904.2771} {\bibinfo
  {title} {Topological phases and quantum computation}} (\bibinfo {year}
  {2009})\BibitemShut {NoStop}%
\bibitem [{\citenamefont {Bulchandani}\ \emph {et~al.}(2022)\citenamefont
  {Bulchandani}, \citenamefont {Burnell},\ and\ \citenamefont
  {Sondhi}}]{companion}%
  \BibitemOpen
  \bibfield  {author} {\bibinfo {author} {\bibfnamefont {V.~B.}\ \bibnamefont
  {Bulchandani}}, \bibinfo {author} {\bibfnamefont {F.~J.}\ \bibnamefont
  {Burnell}},\ and\ \bibinfo {author} {\bibfnamefont {S.~L.}\ \bibnamefont
  {Sondhi}},\ }\href {https://doi.org/10.48550/ARXIV.2206.11252} {\bibinfo
  {title} {Playing nonlocal games with phases of quantum matter}} (\bibinfo
  {year} {2022})\BibitemShut {NoStop}%
\bibitem [{\citenamefont {Boyer}(2004)}]{boyer2004extended}%
  \BibitemOpen
  \bibfield  {author} {\bibinfo {author} {\bibfnamefont {M.}~\bibnamefont
  {Boyer}},\ }\href@noop {} {\bibfield  {journal} {\bibinfo  {journal} {arXiv
  preprint quant-ph/0408090}\ } (\bibinfo {year} {2004})}\BibitemShut {NoStop}%
\bibitem [{Note1()}]{Note1}%
  \BibitemOpen
  \bibinfo {note} {This follows by analogous arguments for the parity game\cite
  {brassard2005recasting}.}\BibitemShut {Stop}%
\bibitem [{\citenamefont {Levin}\ and\ \citenamefont {Wen}(2005)}]{StringNet}%
  \BibitemOpen
  \bibfield  {author} {\bibinfo {author} {\bibfnamefont {M.~A.}\ \bibnamefont
  {Levin}}\ and\ \bibinfo {author} {\bibfnamefont {X.-G.}\ \bibnamefont
  {Wen}},\ }\href {https://doi.org/10.1103/PhysRevB.71.045110} {\bibfield
  {journal} {\bibinfo  {journal} {Phys. Rev. B}\ }\textbf {\bibinfo {volume}
  {71}},\ \bibinfo {pages} {045110} (\bibinfo {year} {2005})}\BibitemShut
  {NoStop}%
\bibitem [{\citenamefont {Fowler}\ \emph {et~al.}(2012)\citenamefont {Fowler},
  \citenamefont {Mariantoni}, \citenamefont {Martinis},\ and\ \citenamefont
  {Cleland}}]{SurfaceCodes}%
  \BibitemOpen
  \bibfield  {author} {\bibinfo {author} {\bibfnamefont {A.~G.}\ \bibnamefont
  {Fowler}}, \bibinfo {author} {\bibfnamefont {M.}~\bibnamefont {Mariantoni}},
  \bibinfo {author} {\bibfnamefont {J.~M.}\ \bibnamefont {Martinis}},\ and\
  \bibinfo {author} {\bibfnamefont {A.~N.}\ \bibnamefont {Cleland}},\ }\href
  {https://doi.org/10.1103/PhysRevA.86.032324} {\bibfield  {journal} {\bibinfo
  {journal} {Phys. Rev. A}\ }\textbf {\bibinfo {volume} {86}},\ \bibinfo
  {pages} {032324} (\bibinfo {year} {2012})}\BibitemShut {NoStop}%
\end{thebibliography}%

% \clearpage
\onecolumngrid
\appendix
\section{Proof of Lemma \ref{Lemma1}}
\label{App1}
In this Appendix, we prove Lemma \ref{Lemma1} of the main text.
\begin{proof}
First fix an input to the game, $a_j \in \{0,1\}$ with $\sum_{j=1}^T a_j$ even.  Write the (normalized) state $|\psi\rangle$ as
\begin{equation}
    |\psi\rangle = \sum_{\vec{\sigma} \in \{0,1\}^{2N}} c_{\vec{\sigma}} | \sigma_1 \sigma_2\ldots \sigma_{2N} \rangle
\end{equation}
in the computational basis. After Step 2, we have
\begin{equation}
|\psi'\rangle = \sum_{\vec{\sigma}\in\{0,1\}^{2N}} c_{\vec{\sigma}}i^{\sum_{j=1}^T a_j w_j(\vec{\sigma})}|\sigma_1 \sigma_2 \ldots \sigma_{2N}\rangle,
\end{equation}
where to fix the correct branch of $\hat{W}_j^{a_j/2}$, we define $w_j(\vec{\sigma}) \in \{0,1\}$ by
\begin{equation}
w_j(\vec{\sigma}) \equiv \sum_{b\in\Gamma_j} \sigma_b \, \mathrm{mod} \, 2.
\end{equation}
Before performing Step 3, it will be useful to work in the $\hat{X}$ basis at every bond and write $|y_b\rangle= |\hat{X}_b=(-1)^{y_b}\rangle$. This yields
\begin{align}
 |\psi'\rangle = 
%  \sum_{\vec{\sigma}\in\{0,1\}^{2N}}  i^{\sum_{j=1}^T a_j w_j(\vec{\sigma})}c_{\vec{\sigma}} \bigotimes_{b\in B} \left(| 0\rangle + (-1)^{\sigma_b}|1\rangle\right) \\
% \nonumber &= \frac{1}{2^{N}}\sum_{\vec{\sigma}\in\{0,1\}^{2N}} i^{\sum_{j=1}^T a_j w_j(\vec{\sigma})} c_{\vec{\sigma}} \sum_{\vec{y}\in\{0,1\}^{2N}} e^{i\pi \sum_{j=1}^{2N} \sigma_j y_j} |y_1 y_2 \ldots y_{2N}\rangle \\
% &= 
\sum_{\vec{y}\in\{0,1\}^{2N}} \left(\frac{1}{2^{N}}\sum_{\vec{\sigma}\in\{0,1\}^{2N}} e^{i(\pi/2)\sum_{j=1}^T a_j w_j(\vec{\sigma})} e^{i\pi \sum_{j=1}^{2N} \sigma_j y_j} c_{\vec{\sigma}}\right) |y_1 y_2 \ldots y_{2N}\rangle.
\end{align}
Then the probability of winning the game with the state $|\psi\rangle$ given the input $\vec{a}$ is given by
\begin{equation}
p(|\psi\rangle,\vec{a}) = \sum_{\substack{\vec{y} \in \{0,1\}^{2N}\\ \sum_{b \in \widetilde{\Gamma}} y_{b} \equiv r \, \mathrm{mod} \,2}} \frac{1}{2^{2N}}\left| \sum_{\vec{\sigma}\in\{0,1\}^{2N}} e^{i(\pi/2)\sum_{j=1}^T a_jw_j(\vec{\sigma})} e^{i\pi \sum_{j=1}^{2N} \sigma_j y_j} c_{\vec{\sigma}}\right|^2
\end{equation}
where $r = \sum_{j=1}^T a_j/2$. We can write this as
\begin{align}
p(|\psi\rangle,\vec{a}) = \frac{1}{2^{2N}}  \sum_{\vec{\sigma},\vec{\sigma}'\in\{0,1\}^{2N}}c_{\vec{\sigma}}c_{\vec{\sigma}'}^*e^{i(\pi/2)\sum_{j=1}^T a_j (w_j(\vec{\sigma}) - w_j(\vec{\sigma}'))} \sum_{\substack{\vec{y} \in \{0,1\}^{2N}\\ \sum_{b\in \widetilde{\Gamma}} y_{b} \equiv r \, \mathrm{mod} \,2}} e^{i\pi \sum_{j=1}^{2N} (\sigma_j-\sigma_j')y_j}. 
\end{align}
Then
\begin{align}
\sum_{\substack{\vec{y} \in \{0,1\}^{2N}\\ \sum_{b \in \widetilde{\Gamma}} y_{b} \equiv r \, \mathrm{mod} \,2}} e^{i\pi \sum_{j=1}^{2N} (\sigma_j-\sigma_j')y_j} &= \sum_{\vec{y} \in \{0,1\}^{2N-|\widetilde{\Gamma}|}} e^{i\pi \sum_{b \not{\in} \widetilde{\Gamma}} (\sigma_b-\sigma_b')y_b}\sum_{\substack{\vec{y} \in \{0,1\}^{|\widetilde{\Gamma}|}\\ \sum_{b\in \widetilde{\Gamma}} y_{b} \equiv r \, \mathrm{mod} \,2}} e^{i\pi \sum_{b \in \widetilde{\Gamma}} (\sigma_b-\sigma_b')y_b}.
\end{align}

By the identity
\begin{equation}
\label{identity2}
\sum_{\{\vec{y}:\sum_{j=1}^M y_j \equiv r \,\mathrm{mod} \,2\}} \prod_{j=1}^M z_j^{y_j} = \frac{1}{2} \left( \prod_{j=1}^M (1+z_j) + (-1)^r  \prod_{j=1}^M (1-z_j)\right),
\end{equation}
we have
\begin{align}
\sum_{\vec{y} \in \{0,1\}^{2N-|\widetilde{\Gamma}|}} e^{i\pi \sum_{b \not{\in}\widetilde{\Gamma}} (\sigma_b-\sigma'_b)y_b} = 2^{2N-|\widetilde{\Gamma}|}\prod_{b \not{\in} \widetilde{\Gamma}} \delta_{\sigma_b,\sigma'_b}
\end{align}
and
\begin{equation}
\sum_{\substack{\vec{y} \in \{0,1\}^{|\widetilde{\Gamma}|}\\ \sum_{b\in \widetilde{\Gamma}} y_{b} \equiv r \, \mathrm{mod} \,2}} e^{i\pi \sum_{b \in \widetilde{\Gamma}} (\sigma_b-\sigma_b')y_b} = 2^{|\widetilde{\Gamma}|-1} \left( \prod_{b \in \widetilde{\Gamma}} \delta_{\sigma_b,\sigma'_b} +  (-1)^r  \prod_{b \in \widetilde{\Gamma}} \delta_{\sigma_b,1-\sigma'_b} \right).
\end{equation}
Thus
\begin{equation}
\nonumber \sum_{\substack{\vec{y} \in \{0,1\}^{2N}\\ \sum_{b \in \widetilde{\Gamma}} y_{b} \equiv r \, \mathrm{mod} \,2}} e^{i\pi \sum_{j=1}^{2N} (\sigma_j-\sigma_j')y_j} = 2^{2N-1}\left( \prod_{b} \delta_{\sigma_b,\sigma'_b} +  (-1)^r  \prod_{b \not{\in} \widetilde{\Gamma}} \delta_{\sigma_b,\sigma'_b} \prod_{b\in \widetilde{\Gamma}}  \delta_{\sigma_b,1-\sigma'_b} \right),
\end{equation}
which implies that
\begin{equation}
p(|\psi\rangle,\vec{a}) = \frac{1}{2}\left(1+ (-1)^r \sum_{\vec{\sigma},\vec{\sigma}'\in\{0,1\}^{2N}}c_{\vec{\sigma}}c_{\vec{\sigma}'}^*e^{i(\pi/2)\sum_{j=1}^{T} a_j (w_j(\vec{\sigma})-w_j(\vec{\sigma}'))} 
 \prod_{b \not{\in} \widetilde{\Gamma}} \delta_{\sigma_b,\sigma'_b} \prod_{b \in \widetilde{\Gamma}}  \delta_{\sigma_b,1-\sigma'_b}
\right). 
\end{equation}
In the second term, we note that since $w_j \in \{0,1\}$ by definition,
\begin{equation}
w_j(\vec{\sigma}') = \left(\sum_{b\in \Gamma_j} \sigma_b  + 1 - 2\sigma_{t_j}\right) \, \mathrm{mod} \, 2 = 1-w_j(\vec{\sigma}),
\end{equation}
so that
\begin{equation}
w_j(\vec{\sigma}) - w_j(\vec{\sigma'}) = 1-2w_j(\vec{\sigma}).
\end{equation}
Thus 
\begin{equation}
e^{i(\pi/2)\sum_{j=1}^{T} a_j (w_j(\vec{\sigma})-w_j(\vec{\sigma}'))} = (-1)^{r+ \sum_{j=1}^T a_j w_j(\vec{\sigma})},
\end{equation}
which finally yields
\begin{equation}
p(|\psi\rangle,\vec{a}) = \frac{1}{2}\left(1+\sum_{\vec{\sigma} \in \{0,1\}^{2N}}(-1)^{\sum_{j=1}^T a_j w_j(\vec{\sigma})} c_{\vec{\sigma}_{\widetilde{\Gamma}},\vec{\sigma}_{\widetilde{\Gamma}^c}}c_{\vec{1}-\vec{\sigma}_{\widetilde{\Gamma}},\vec{\sigma}_{\widetilde{\Gamma}^c}}^* \right).
\end{equation}
We now average uniformly over allowed inputs $a_j$ to yield the quantum winning probability
\begin{align}
\nonumber p_{\mathrm{qu}}(|\psi\rangle) = \frac{1}{2^{T-1}} \sum_{\substack{\vec{a} \in \{0,1\}^{T} \\ \sum_{j=1}^{T} a_j \, \mathrm{even}}} p(|\psi\rangle,\vec{a}) =  \frac{1}{2} +  \frac{1}{2^T}\sum_{\vec{\sigma} \in \{0,1\}^{2N}}c_{\vec{\sigma}_{\widetilde{\Gamma}},\vec{\sigma}_{\widetilde{\Gamma}^c}}c_{\vec{1}-\vec{\sigma}_{\widetilde{\Gamma}},\vec{\sigma}_{\widetilde{\Gamma}^c}}^* \sum_{\substack{\vec{a} \in \{0,1\}^T \\ \sum_{j=1}^T a_j \, \mathrm{even}}} (-1)^{\sum_{j=1}^T a_j w_j(\vec{\sigma})}.
\end{align}
By Eq. \eqref{identity2},
\begin{align}
\sum_{\substack{\vec{a} \in \{0,1\}^T\\ \sum_{j=1}^T a_{j} \, \mathrm{even}}} (-1)^{\sum_{j=1}^T a_j w_j(\vec{\sigma})} = 2^{T-1} \left( \prod_{j=1}^T \delta_{w_j(\vec{\sigma}),0} + \prod_{j=1}^T \delta_{w_j(\vec{\sigma}),1} \right).
\end{align}
Then, imposing the Kronecker delta constraints above, we find that
% \begin{align}
% p_{\mathrm{qu}}(|\psi\rangle) =  \frac{1}{2} +  \frac{1}{2}\sum_{\{\vec{\sigma} \in \{0,1\}^{2N}: w_j(\vec{\sigma}) = 0\}} c_{\vec{\sigma}_{\widetilde{\Gamma}}\vec{\sigma}_{\widetilde{\Gamma}^c}}c_{\vec{1}-\vec{\sigma}_{\widetilde{\Gamma}},\vec{\sigma}_{\widetilde{\Gamma}^c}}^* + \frac{1}{2}\sum_{\{\vec{\sigma} \in \{0,1\}^{2N}: w_j(\vec{\sigma}) = 1\}} c_{\vec{\sigma}_{\widetilde{\Gamma}}\vec{\sigma}_{\widetilde{\Gamma}^c}}c_{\vec{1}-\vec{\sigma}_{\widetilde{\Gamma}},\vec{\sigma}_{\widetilde{\Gamma}^c}}^* 
% \end{align}
% To express this in terms of fidelities, we can reorder the second sum to yield
\begin{align}
p_{\mathrm{qu}}(|\psi\rangle) =  \frac{1}{2} +  \frac{1}{2}\sum_{\{\vec{\sigma} \in \{0,1\}^{2N}: w_j(\vec{\sigma}) = 0\}} \left(c_{\vec{\sigma}_{\widetilde{\Gamma}},\vec{\sigma}_{\widetilde{\Gamma}^c}}c_{\vec{1}-\vec{\sigma}_{\widetilde{\Gamma}},\vec{\sigma}_{\widetilde{\Gamma}^c}}^* + c_{\vec{1}-\vec{\sigma}_{\widetilde{\Gamma}},\vec{\sigma}_{\widetilde{\Gamma}^c}}c_{\vec{\sigma}_{\widetilde{\Gamma}},\vec{\sigma}_{\widetilde{\Gamma}^c}}^* \right).
\end{align}
The physical interpretation of this result is a summation over fidelities to ``topological cat states''. To be explicit, we define the $2^{2N}$ pairs of GHZ-like states
\begin{equation}
|\varphi^{\pm}_{\widetilde{\Gamma}}(\vec{\sigma})\rangle = \frac{1}{\sqrt{2}} \left( \bigotimes_{b \in \widetilde{\Gamma}} |\sigma_b\rangle   \pm \bigotimes_{b \in \widetilde{\Gamma}} |1-\sigma_b\rangle\right) \bigotimes_{b \in \widetilde{\Gamma}^c} |\sigma_b\rangle,
\end{equation}
Defining the dual Wilson line $\hat{V}_{\widetilde{\Gamma}} = \prod_{b\in\widetilde{\Gamma}} X_b$ and writing $|\vec{\sigma}\rangle =  \bigotimes_{b} |\sigma_b\rangle$, these states can be written as
\begin{align}
|\varphi^{\pm}_{\widetilde{\Gamma}}(\vec{\sigma})\rangle = \frac{1}{\sqrt{2}}(1 \pm \hat{V}_{\widetilde{\Gamma}})|\vec{\sigma}\rangle.
\end{align}

In terms of these states, the quantum winning probability is given by
\begin{equation}
\nonumber p_{\mathrm{qu}}(|\psi\rangle) =  \frac{1}{2} \left(1+  \sum_{\{\vec{\sigma} \in \{0,1\}^{2N}: \hat{W}_j|\vec{\sigma}\rangle = |\vec{\sigma}\rangle\}} |\langle \psi |\varphi_{\widetilde{\Gamma}}^{+}(\vec{\sigma})\rangle|^2 - |\langle \psi |\varphi^{-}_{\widetilde{\Gamma}}(\vec{\sigma})\rangle|^2 \right).
\end{equation}
\end{proof}

% With the aid of this Lemma, we can use the toric code game and the protocol $\mathcal{P}_{\mathrm{TC}}$ to certify toric code ground states, in the sense of the following uniqueness theorem.

% (To be precise, this theorem determines half the ground-state manifold of the toric code Hamiltonian; to determine the remaining ground states it suffices to e.g. consider playing the toric code game with horizontal teams and a vertical dual loop.)

% \bibliography{games}

\end{document}